\theoremstyle{plain} \newtheorem{theorem}{Theorem} \newtheorem{proposition}{Proposition} \newtheorem{lemma}{Lemma} 
\theoremstyle{definition}    
\newcommand{\utwi}[1]{\mbox{\boldmath $ #1$}}
\theoremstyle{remark}   
\begin{document}

\newif\ifblinded

\title{Dynamic Function-on-Scalars Regression}

\ifblinded
\author{}
\else

\author{Daniel R. Kowal\thanks{Assistant Professor, Department of Statistics, Rice University, Houston, TX 77251-1892 (E-mail: \href{mailto:daniel.kowal@rice.edu}{daniel.kowal@rice.edu}).}}

\fi

\maketitle
\large 

\vspace{-10mm}

\begin{abstract}
We develop a modeling framework for dynamic function-on-scalars regression, in which a time series of functional data is regressed on a time series of scalar predictors. The regression coefficient function for each predictor is allowed to be dynamic, which is essential for applications where the association between predictors and a (functional) response is time-varying. For greater modeling flexibility, we design a nonparametric reduced-rank functional data model with an unknown functional basis expansion, which is data-adaptive and, unlike most existing methods, modeled as unknown for appropriate uncertainty quantification. Within a Bayesian framework, we introduce shrinkage priors that simultaneously (i) regularize time-varying regression coefficient functions to be locally static, (ii) effectively remove unimportant predictor variables from the model, and (iii) reduce sensitivity to the dimension of the functional basis. A simulation analysis confirms the importance of these shrinkage priors, with notable improvements over existing alternatives. We develop a novel projection-based Gibbs sampling algorithm, which offers unrivaled computational scalability for fully Bayesian functional regression. We apply the proposed methodology (i) to analyze the time-varying impact of macroeconomic variables on the U.S. yield curve and (ii) to characterize the effects of socioeconomic and demographic predictors on age-specific fertility rates in South and Southeast Asia.
\end{abstract}
\noindent {\bf KEYWORDS: time series; Bayesian methods; factor model; yield curve; fertility}
\clearpage

\section{Introduction}
We are interested in modeling the association between a \emph{functional} response and \emph{scalar} predictors, commonly referred to as \emph{function-on-scalars regression} (FOSR); see \cite{silverman2005functional} and \cite{morris2015functional}. We address the additional complication that the functional response and the scalar predictors are both \emph{time-ordered}. Applications of time-ordered functional data, or \emph{functional time series}, are abundant, including: daily interest rate curves as a function of time to maturity \citep{FDFM,kowal2017functional}; yearly sea surface temperature as a function of time-of-year \citep{besse2000autoregressive}; yearly mortality rates as a function of age \citep{hyndman2007robust}; daily pollution curves as a function of time-of-day \citep{damon1z2002inclusion,aue2015prediction}; and a collection of spatio-temporal applications in which a time-dependent variable is measured as a continuous function of spatial location (e.g., \citealp{cressie2011statistics}). In these applications and others, there may be interest in modeling the relationship between the functional time series and dynamic predictors.

In functional regression, a fundamental challenge is appropriately accounting for within-curve dependence, or smoothness, while simultaneously modeling the effects of predictor variables. In the dynamic setting, the time-ordering of functional data and predictors introduces further complications. Unmodeled (time) dependence produces statistically inefficient estimators and can lead to incorrect inference and spurious relationships. In many applications, the association between predictors and the functional response may be time-varying. \cite{dangl2012predictive} discuss the importance of time-varying parameter regression for macroeconomic data, but the concepts are  broadly applicable: structural shifts obscure (dynamic) relationships and produce inferior estimates, predictions, and forecasts. It is therefore essential to account for both \emph{time-dependence} and \emph{time-variation}. 


We propose a Bayesian \emph{dynamic function-on-scalars regression} (DFOSR) model to jointly model \emph{within-curve} (functional) dependence, \emph{between-curve} (time) dependence, and \emph{dynamic associations} with scalar predictors. Within-curve dependence is modeled nonparametrically using a reduced-rank functional data model, which provides model flexibility for broad applicability. The unknown basis functions are endowed with a prior distribution that encourages smoothness, produces data-adaptive basis functions, and incorporates uncertainty quantification via the posterior distribution. We introduce an autoregressive structure for between-curve dependence and model the dynamic predictors by extending \emph{time-varying parameter regression} to the functional data setting. Time-varying parameter regression has successfully improved estimation and forecasting for scalar time series \citep{dangl2012predictive,korobilis2013hierarchical,belmonte2014hierarchical,kowal2017dynamic}, but to the best of our knowledge has not yet been used for functional data. We introduce shrinkage priors that simultaneously guard against overfitting yet 
preserve model flexibility.  A simulation study (Section \ref{simulations}) confirms the importance of these priors and demonstrates decisive improvements in statistical efficiency and uncertainty quantification relative to existing alternatives. Computationally scalable posterior inference is achieved using an efficient Gibbs sampling algorithm. The model is applicable for both densely- and sparsely-observed functional data (see Sections \ref{yields} and \ref{fertility}, respectively), with a model-based imputation procedure for the latter case. 

Our methodology is motivated by two applications. First, we study the impact of macroeconomic variables on the U.S. yield curve. For a given currency and level of risk of a debt, the yield curve describes the interest rate at a given time as a \emph{function} of the length of the borrowing period, or time to maturity, and evolves over \emph{time}. We study the dynamic associations between U.S. interest rates and several fundamental components in the U.S. economy, in particular real activity, monetary policy, and inflation. Building upon the setting in \cite{diebold2006macroeconomy}, our approach (i) relaxes the parametric (Nelson-Siegel) assumption for the functional component, (ii) allows for the macroeconomic associations with the yield curve to be time-varying, (iii) incorporates a model for volatility clustering, and (iv) provides fully Bayesian inference and joint estimation of model parameters. As a result, we gain insight into how these important macroeconomic variables are related to interest rates of different maturities, and how these relationships vary over time.

Second, we analyze \emph{age-specific fertility rates} (ASFRs) for developing nations in South and Southeast Asia. ASFRs measure fertility as a \emph{function} of age within a population, which changes over \emph{time}, and may depend on socioeconomic and demographic \emph{predictor variables}. Fertility is a fundamental component in population growth, with major implications for planning and allocation of resources. Our methodology provides a mechanism for understanding how various socioeconomic and demographic variables impact the \emph{shape} of the ASFR, which allows for differential age-specific effects with appropriate uncertainty quantification.

The remainder of the paper is organized as follows: we introduce the model in Section \ref{dfosr}; the reduced-rank functional data model is  in Section  \ref{loadings}; the shrinkage priors are  in Section \ref{shrinkage}; a simulation analysis is in Section \ref{simulations}; we apply the model to yield curves in Section \ref{yields} and age-specific fertility rates in Section \ref{fertility}; the MCMC algorithm is in Section \ref{MCMC}; and we conclude in Section \ref{conclusions}. Supplementary files include an \texttt{R} package available on Github, the yield curve and fertility datasets, and an Appendix with additional details on the MCMC algorithm, simulations, and the applications.

\section{Dynamic Function-on-Scalars Regression}\label{dfosr}
Let $\{Y_t(\bm\tau)\}_{t=1}^T$ be a time-ordered sequence of functional data with $\bm\tau \in \mathcal{T}$, where $\mathcal{T} \subset \mathbb{R}^D$ is a compact index set and $D \in \mathbb{Z}^+$. Suppose we have time-ordered predictors $\bm x_t = (x_{1,t},\ldots, x_{p,t})'$ and we are interested in modeling the association between the scalar predictors $ x_{j,t}$ and the functional response $Y_t$. We consider the setting in which the relationship between $x_{j,t}$ and $Y_t$ may be time-varying. The proposed \emph{dynamic function-on-scalars regression} (DFOSR) model has three levels, which are jointly expressed via \eqref{fts}-\eqref{evol} below. 

First, we decompose the functional time series $Y_t$ into a linear combination of $K$  \emph{loading curves}, $\{f_k(\bm\tau)\}_{k=1}^K$, and  \emph{factors}, $\{\beta_{k,t}\}_{k=1}^K$, for each time $t=1,\ldots,T$:
\begin{equation}
\label{fts}
Y_t(\bm \tau) = \sum_{k=1}^K f_k(\bm \tau) \beta_{k,t} + \epsilon_t(\bm \tau), \quad \epsilon_t(\bm\tau) \stackrel{indep}{\sim}N(0, \sigma_{\epsilon_t}^2), \quad \bm \tau \in \mathcal{T}
\end{equation}
Model \eqref{fts} is a \emph{dynamic functional factor model}: the loadings $\{f_k\}$ are modeled as smooth unknown functions of $\bm \tau$ to account for the within-curve correlation structure in $Y_t$, and the factors $\{\beta_{k,t}\}$ are modeled dynamically to account for the between-curve time dependence in $Y_t$. Equivalently, we may interpret $\{f_k\}$ as a time-invariant functional basis for $Y_t$ with dynamic basis coefficients  $\{\beta_{k,t}\}$, which we model using dynamic predictor variables (see  \eqref{reg} below). 
Each $f_k$ is modeled nonparametrically using \emph{low-rank thin plate splines}, which are well-defined for $\mathcal{T}\subset \mathbb{R}^D$ with $D \in \mathbb{Z}^+$ and are smooth, flexible, and efficient to compute \citep{ruppert2003semiparametric,wood2006generalized}. By modeling the $\{f_k\}$ as unknown, and imposing suitable identifiability constraints (see Section \ref{orthog}), our model incorporates the uncertainty of $\{f_k\}$ into the posterior distribution for all parameters of interest, which is necessary for valid inference. Model \eqref{fts} assumes conditionally Gaussian errors $\epsilon_t(\cdot)$, possibly with dynamic variance $\sigma_{\epsilon_t}^2$ to account for volatility clustering (see Section \ref{yields}). 

Next, we introduce a dynamic regression component to incorporate the predictors $x_{j,t}$:
\begin{equation}
\label{reg}
\beta_{k,t} = \mu_{k} +  \sum_{j=1}^p x_{j,t} \alpha_{j,k,t} + \gamma_{k,t}, \quad \gamma_{k,t} = \phi_k \gamma_{k,t-1} + \eta_{k,t}, \quad \eta_{k,t} \stackrel{indep}{\sim}N(0, \sigma_{\eta_{k,t}}^2) 
\end{equation}
where $\mu_k$ is the intercept for factor $k$, $\alpha_{j,k,t}$ is the time-varying regression coefficient for predictor $j$ and factor $k$ at time $t$, and $\gamma_{k,t}$ is the regression error term, which we allow to be autocorrelated via an AR(1) process. Extensions to more general time series models for $\gamma_{k,t}$ in \eqref{reg}, such as ARIMA models, may be easily incorporated into the proposed model framework. Each regression coefficient $ \alpha_{j,k,t}$ varies with $k$, and therefore its association with $Y_t(\bm \tau)$ for a particular $\bm \tau$ may be interpreted via the loading curve $f_k(\bm \tau)$.

Lastly, we specify the dynamics---and regularization---for the regression coefficients, $\alpha_{j,k,t}$:
\begin{equation}
\label{evol}
\alpha_{j,k,t} = \alpha_{j,k,t-1}  + \omega_{j,k,t}, \quad \omega_{j,k,t} \stackrel{indep}{\sim}N(0, \sigma_{\omega_{j,k,t}}^2)
\end{equation}
 For each $k$, \eqref{reg}-\eqref{evol} is a time-varying parameter regression for the dynamic predictors $x_{j,t}$, where the factors $\beta_{k,t}$ operate as the response variable. We select priors for $\sigma_{\omega_{j,k,t}}^2$ in Section \ref{shrinkage} to encourage shrinkage of $\alpha_{j,k,t}$. Locally, we shrink $\omega_{j,k,t}$ toward zero, which implies that $\alpha_{j,k,t} \approx \alpha_{j,k,t-1}$ is locally constant at time $t$. Importantly, the factor-specific regression coefficients $\alpha_{j,k,t}$ are allowed to change at any time $t$, which may capture structural shifts, but the shrinkage prior encourages a more parsimonious model. Globally, we shrink $\omega_{j,k,t}$ toward zero for all $t$, which, combined with shrinkage of the initial state $\alpha_{j,k,0}$, effectively removes factor $k$ for predictor $j$ from the model. Finally, we introduce \emph{ordered} shrinkage across $k=1,\ldots,K$  to cumulatively reduce the relative importance of the higher number factors $k$, which mitigates the impact of the choice of $K$, as long as $K$ is chosen sufficiently large. The simulation analysis in Section \ref{simulations} validates the importance of these shrinkage priors.  

The DFOSR  \eqref{fts}-\eqref{evol} also induces a model representation in the functional  $\bm \tau \in \mathcal{T}$ space. Let $\mathcal{GP}(c, C)$ denote a Gaussian process with mean function $c$ and covariance function $C$. 
\begin{proposition} \label{model_equiv}
Model \eqref{fts}-\eqref{evol} implies the dynamic functional regression model
\begin{align}
\label{dfosr_fts}
Y_t(\bm \tau) &=  \tilde \mu(\bm \tau) + \sum_{j=1}^p x_{j,t}  {\tilde \alpha}_{j,t}(\bm \tau) + \tilde \gamma_t(\bm \tau) + \epsilon_t(\bm \tau), \quad \epsilon_t(\bm\tau) \stackrel{indep}{\sim}N(0, \sigma_{\epsilon_t}^2), \quad \bm \tau \in \mathcal{T} \\
\label{dfosr_reg}
\tilde \gamma_t(\bm \tau) &= \int \tilde \phi(\bm \tau, \bm u)  \tilde \gamma_{t-1}(\bm u) \, d\bm u + \tilde \eta_t(\bm \tau), \quad  \tilde \eta_t(\cdot) \stackrel{indep}{\sim} \mathcal{GP}(0, C_{\eta_t}) \\
\label{dfosr_evol}
\tilde \alpha_{j,t}(\bm \tau) &=\tilde \alpha_{j,t-1}(\bm \tau) + \tilde \omega_t(\bm \tau), \quad  \tilde \omega_{j,t}(\cdot) \stackrel{indep}{\sim} \mathcal{GP}(0, C_{\omega_{j,t}}) 
\end{align}
under the expansions $\tilde \mu(\bm \tau) = \sum_k f_k(\bm\tau) \mu_k$, $ {\tilde \alpha}_{j,t}(\bm \tau)  = \sum_k f_k(\bm \tau)  \alpha_{j,k,t}$, $ \tilde \gamma_t(\bm \tau) = \sum_k f_k(\bm \tau) \gamma_{k,t}$, $\tilde \phi(\bm \tau, \bm u) = \sum_{k} f_k(\bm \tau) f_k(\bm u) \phi_k$, $\tilde \eta_t(\bm \tau) = \sum_k f_k(\bm \tau) \eta_{k,t}$, $\tilde \omega_{j,t}(\bm \tau) = \sum_k f_k(\bm \tau) \omega_{j,k,t}$, and the covariance functions $C_{\eta_t}(\bm \tau, \bm u) = \sum_k f_k(\bm \tau) f_k(\bm u) \sigma_{\eta_{k,t}}^2$ and $C_{\omega_{j,t}}(\bm \tau, \bm u)= \sum_k f_k(\bm \tau) f_k(\bm u) \sigma_{\omega_{j,k,t}}^2$. 
\end{proposition}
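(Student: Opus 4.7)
The plan is to verify the equivalence by direct substitution at each level of the hierarchy, and then identify the laws of the resulting random functions using independence of the scalar innovations across $k$. First, I would substitute the factor-level regression \eqref{reg} into the observation equation \eqref{fts} and collect terms, which immediately gives
\[
Y_t(\bm\tau) = \sum_k f_k(\bm\tau)\mu_k + \sum_{j=1}^p x_{j,t}\sum_k f_k(\bm\tau)\alpha_{j,k,t} + \sum_k f_k(\bm\tau)\gamma_{k,t} + \epsilon_t(\bm\tau),
\]
which is exactly \eqref{dfosr_fts} under the stated definitions of $\tilde\mu$, $\tilde\alpha_{j,t}$, and $\tilde\gamma_t$. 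For the state equation \eqref{dfosr_evol}, plugging the scalar random walk \eqref{evol} into $\tilde\alpha_{j,t}(\bm\tau) = \sum_k f_k(\bm\tau)\alpha_{j,k,t}$ yields the random walk at the functional level with innovation $\tilde\omega_{j,t}(\bm\tau) = \sum_k f_k(\bm\tau)\omega_{j,k,t}$.

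Next I would check that the innovation processes have the claimed Gaussian process laws. Because $\{\omega_{j,k,t}\}_{k=1}^K$ are independent mean-zero Gaussians, any finite linear combination $\tilde\omega_{j,t}(\bm\tau) = \sum_k f_k(\bm\tau)\omega_{j,k,t}$ is jointly Gaussian in $\bm\tau$, and a direct covariance computation yields $\mathrm{Cov}\bigl(\tilde\omega_{j,t}(\bm\tau),\tilde\omega_{j,t}(\bm u)\bigr) = \sum_k f_k(\bm\tau) f_k(\bm u)\sigma_{\omega_{j,k,t}}^2 = C_{\omega_{j,t}}(\bm\tau,\bm u)$. The identical argument applied to $\{\eta_{k,t}\}_{k=1}^K$ gives $\tilde\eta_t(\cdot)\sim\mathcal{GP}(0,C_{\eta_t})$.

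The one nontrivial step is reproducing the scalar AR(1) on $\gamma_{k,t}$ as the integral operator equation \eqref{dfosr_reg}. Here I would compute
\[
\int_{\mathcal T}\tilde\phi(\bm\tau,\bm u)\,\tilde\gamma_{t-1}(\bm u)\,d\bm u = \sum_{k,\ell} f_k(\bm\tau)\,\phi_k\,\gamma_{\ell,t-1}\int_{\mathcal T} f_k(\bm u) f_\ell(\bm u)\,d\bm u,
\]
which collapses to $\sum_k f_k(\bm\tau)\phi_k\gamma_{k,t-1}$ only after invoking the orthonormality constraint $\int_{\mathcal T} f_k(\bm u) f_\ell(\bm u)\,d\bm u = \delta_{k\ell}$ imposed as the identifiability condition in Section~\ref{orthog}. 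Adding $\tilde\eta_t(\bm\tau) = \sum_k f_k(\bm\tau)\eta_{k,t}$ then reproduces $\sum_k f_k(\bm\tau)(\phi_k\gamma_{k,t-1} + \eta_{k,t}) = \tilde\gamma_t(\bm\tau)$, completing \eqref{dfosr_reg}. The main obstacle, such as it is, lies precisely in this orthonormality step: without the identifiability constraint on $\{f_k\}$, the kernel $\tilde\phi$ would not act as the diagonal operator in the basis $\{f_k\}$, and the collapse of the double sum would fail; otherwise the proposition is established by substitution and the elementary fact that independent Gaussians have variance equal to the sum of their variances.
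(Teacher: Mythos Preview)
The paper states Proposition~\ref{model_equiv} without proof, so there is no argument to compare against; direct substitution, as you carry it out, is the natural and intended route, and your verification of the Gaussian-process laws for $\tilde\omega_{j,t}$ and $\tilde\eta_t$ is correct.

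There is one genuine wrinkle in your handling of \eqref{dfosr_reg}. The identifiability constraint actually imposed in Section~\ref{orthog} is the \emph{discrete} condition $\bm F'\bm F = \bm I_K$ on the vectors of loading-curve evaluations at the observation points $\bm\tau_1,\ldots,\bm\tau_M$, not the continuous $L^2$ orthonormality $\int_{\mathcal T} f_k(\bm u)f_\ell(\bm u)\,d\bm u = \delta_{k\ell}$ that you invoke to collapse the double sum. The integral-operator form of \eqref{dfosr_reg} as written does require the continuous version in order for $\tilde\phi$ to act diagonally in the basis $\{f_k\}$, so either the paper is tacitly treating the discrete constraint as a discretization of $L^2$ orthonormality, or the proposition is being stated somewhat informally on this point. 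Your instinct that orthonormality is the only nontrivial ingredient is exactly right; just be precise that the constraint available from Section~\ref{orthog} is not literally the one you need, and flag the gap accordingly.
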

The predictors $x_{j,t}$ are directly associated with the functional time series $Y_t(\bm \tau)$ via the dynamic regression coefficient functions $\tilde \alpha_{j,t}(\bm \tau)= \sum_k f_k(\bm \tau)  \alpha_{j,k,t}$. Since we obtain MCMC draws from the posterior distribution of $\{f_k\}$ and $\{\alpha_{j,k,t}\}$, we may conduct posterior inference on $\tilde \alpha_{j,t}(\bm \tau)$ directly without modifying the MCMC sampling algorithm. The error term $\tilde \gamma_t(\bm \tau)$ captures the large-scale variability in $Y_t(\bm \tau)$ at time $t$, and is autocorrelated, while the error term $\epsilon_t(\bm \tau)$ models the small-scale variability, i.e., the observation error. Equation \eqref{dfosr_reg} is a \emph{functional autoregressive model} for $\tilde \gamma_t(\bm \tau)$, which is the functional data analog of (vector) autoregression for time series data (e.g., \citealp{kowal2017functional}). 

There are several important special cases of the DFOSR model \eqref{fts}-\eqref{evol}. If $\bm x_t = \bm 0$ for all $t$, i.e., there are no predictors, model \eqref{fts}-\eqref{reg} is a reduced-rank functional factor model with autocorrelated factors, which is useful for modeling and forecasting functional time series data \citep{FDFM,aue2015prediction,kowal2017bayesian}. If $\alpha_{j,k,t}=  \alpha_{j,k}$ for all $t$ and $\phi_k =0$ for all $k$, model \eqref{fts}-\eqref{reg} is a (Bayesian) FOSR model \citep{morris2006wavelet,zhu2011robust,montagna2012bayesian}. If $\alpha_{j,k,t}=  \alpha_{j,k}$ for all $t$, model \eqref{fts}-\eqref{reg} is a Bayesian FOSR model with autoregressive errors (FOSR-AR). Note that our setting is similar to, but distinct from, longitudinal functional data analysis (e.g., \citealp{greven2011longitudinal,park2015longitudinal}). Longitudinal functional data are time-ordered functional data, but typically include replicates of each functional time series (e.g., across subjects) and shorter time series. As a result, methodology for longitudinal functional data may incorporate autocorrelation, but relies less on the dynamic adaptability of \eqref{reg} and \eqref{evol}.


\section{Modeling the Loading Curves}\label{loadings}
Within-curve dependence of the functional data $Y_t$ is modeled by $\{f_k\}$ in \eqref{fts}. Existing methods for FOSR commonly rely on similar expansions in a (known or unknown) basis $\{f_k\}$. Notably, the dimensionality of the basis $K$ governs the dimensionality of the regression in \eqref{reg}. Methods that use full basis expansions, such as splines \citep{laurini2014dynamic} or wavelets \citep{morris2006wavelet,zhu2011robust}, are neither parsimonious nor computationally scalable in the presence of other dependence, such as autocorrelated functional data or time-varying regression functions. An alternative approach is to pre-compute a lower-dimensional basis, such as in functional principal components analysis (FPCA);  see \cite{goldsmith2016assessing}. However, methods that pre-compute a functional basis fail to account for the uncertainty in the unknown basis. This uncertainty is nontrivial: \cite{goldsmith2013corrected} demonstrate that FPC-based methods may substantially underestimate total variability, even for densely-observed functional data. 

Several existing Bayesian reduced-rank functional data models do account for the uncertainty in the dimension reduction, but in general lack sufficient computational scalability (see Table \ref{table:comp}) and model flexibility. \cite{suarez2017bayesian} propose a Bayesian FPCA, but do not incorporate predictors or dependence structures, and rely on a computationally expensive reversible-jump MCMC. \cite{montagna2012bayesian} incorporate predictors, but the model is non-dynamic and does not include shrinkage priors to reduce the impact of unimportant variables. \cite{kowal2017bayesian} propose a functional dynamic linear model, but do not use shrinkage priors for the (time-varying) regression coefficients, which results in less accurate estimates with larger variability (see Section \ref{simulations}). 
In addition, \cite{kowal2017bayesian} only consider functional data with univariate observation points ($D=1$), which limits applicability.

We propose a model for the loading curves $\{f_k\}$ that simultaneously (i) treats $\{f_k\}$ as unknown, which produces a data-adaptive basis and minimizes the number of necessary basis functions $K$; (ii) accounts for the inherent uncertainty in $\{f_k\}$; (iii) is scalable in the number of observation points, $M$; and (iv) is well-defined for $\mathcal{T}\subset \mathbb{R}^D$ with $D \in \mathbb{Z}^+$. In particular, we model each $f_k$ using low-rank thin plate splines (LR-TPS), which are smooth, flexible, and known to be efficient in MCMC samplers \citep{crainiceanu2005bayesian}. We present a general approach for arbitrary basis expansions, but provide details for our preferred LR-TPS implementation in the Appendix. 



\subsection{Full Conditional Distributions: General Basis Functions}\label{genBasis}
A common approach in nonparametric regression and functional data analysis is to represent each unknown function---here, each $f_k$---as a linear combination of known basis functions, and then model the corresponding unknown basis coefficients. Let $f_k(\bm \tau) = \bm b'(\bm \tau) \bm \psi_k$, where $\bm b'(\bm\tau) = (b_1(\bm\tau), \ldots, b_{L_M}(\bm\tau))$ is an $L_M$-dimensional vector of \emph{known} basis functions and $\bm\psi_k$ is an $L_M$-dimensional vector of \emph{unknown} basis coefficients. Popular choices for $\bm b(\cdot)$ include splines, Fourier basis functions, wavelets, and radial basis functions \citep{silverman2005functional,morris2015functional}. The choice of basis functions may be application-specific, and the number of basis functions $L_M$ may depend on the selected basis and the number of observation points, $M$; we provide default specifications for LR-TPS in the Appendix. Typically, basis expansions are combined with a suitable penalty function, such as $\mathcal{P}(f_k) = \int \left[\ddot{f_k}( \tau)\right]^2 d\tau$ for $\ddot f_k$ the second derivative of $f_k$ (assuming $D=1$), which encourages smoothness and guards against overfitting. For Bayesian implementations, such penalties correspond to prior distributions on the basis coefficients $\bm \psi_k$, or equivalently, the implied function $f_k$. For example, the roughness penalty above may be written $\mathcal{P}(f_k) = \bm \psi_k' \bm \Omega_{b} \bm \psi_k$ for known $L_M \times L_M$ penalty matrix $\bm \Omega_{b}$ with $(\ell, \ell')$ entry $[\bm \Omega_{b}]_{\ell, \ell'} = \int \ddot{b}_\ell(\tau)\ddot{b}_{\ell'}(\tau)d\tau$, which is commonly expressed as $\bm \psi_k \sim N(\bm 0, \lambda_{f_k}^{-1} \bm \Omega_{b}^{-1})$ for smoothing parameter $\lambda_{f_k} > 0$. For generality, we assume the prior $\bm \psi_k \sim N(\bm 0, \bm \Sigma_{\psi_k})$ for $k=1,\ldots,K$, which implies a Gaussian process prior on $f_k$ with mean function zero and covariance function $\mbox{Cov}(f_k(\bm\tau), f_k(\bm u)) = \bm b'(\bm\tau) \bm \Sigma_{\psi_k} \bm b(\bm u)$. 


Given functional data observations $\bm Y_t = (Y_t(\bm \tau_1),\ldots, Y_t(\bm \tau_M))'$ at observation points $\{\bm \tau_j\}_{j=1}^M$, the likelihood in \eqref{fts} becomes 
\begin{equation}\label{ftsVec}
\bm Y_t = \sum_{k=1}^K \bm f_k \beta_{k,t} + \bm \epsilon_t,
\quad \bm \epsilon_t \stackrel{indep}{\sim} N(\bm 0, \sigma_{\epsilon_t}^2 \bm I_M)
\end{equation}
where  $\bm f_k =  (f_k(\bm \tau_1),\ldots,  f_k(\bm \tau_M))' = \bm B \bm \psi_k$ are the loading curves evaluated at the observation points, with $\bm B = (\bm b(\tau_1), \ldots, \bm b(\tau_M))'$ the $M\times L_M$ basis matrix. We construct a Bayesian backfitting sampling algorithm that iteratively draws from the full conditional distribution of each $f_k$ conditional on $\{f_\ell\}_{\ell \ne k}$. The full conditional distribution of the corresponding basis coefficients is  $\left[\bm \psi_k | \cdots\right] \sim N\left(\bm Q_{\psi_k}^{-1} \bm \ell_{\psi_k}, \bm Q_{\psi_k}^{-1}\right)$, where 
$\bm Q_{\psi_k} =  (\bm B'\bm B)\sum_{t=1}^T \left(\beta_{k,t}^2/\sigma_{\epsilon_t}^2\right) +  \bm \Sigma_{\psi_k}^{-1}$ and $\bm \ell_{\psi_k} =  \bm B' \sum_{t=1}^T \big[\left(\beta_{k,t}/\sigma_{\epsilon_t}^2\right) \big( \bm Y_t - \sum_{\ell \ne k} \bm f_\ell \beta_{\ell,t} \big)\big]
$. 
Sampling $\bm \psi_k$ has computational complexity at most $\mathcal{O}(L_M^3)$. By comparison, a full rank Gaussian process has computational complexity $\mathcal{O}(M^3)$, and further requires computation of the inverse $\bm \Sigma_{\psi_k}^{-1}$. For  LR-TPS,  $\bm \Sigma_{\psi_k}^{-1} = \lambda_{f_k}\bm \Omega_{b}$ for known matrix $\bm \Omega_{b}$, which eliminates a matrix inversion. If each $f_k$ is smooth, we may let $L_M \ll M$ to improve scalability without sacrificing model fit.


\subsection{Simplifying the Likelihood via Identifiability Constraints}\label{orthog}
We enforce identifiability constraints on the loading curves, $\{f_k\}$, which primarily serves two purposes. First, identifiability allows us to interpret $\{f_k\}$  and the $k$-specific model parameters in \eqref{reg} and \eqref{evol}. Second, our particular choice of constraints provides computational improvements for sampling the parameters in \eqref{reg} and \eqref{evol}. We constrain $\bm F'\bm F = \utwi I_K$, where $\bm F = (\bm f_1,\ldots, \bm f_K) $ is the $M \times K$ matrix of loading curves evaluated at the observation points $\bm\tau_1,\ldots,\bm\tau_M$ and $\utwi I_K$ is the $K\times K$ identity matrix. This constraint, combined with a suitable ordering constraint on $k=1,\ldots,K$ (see Section \ref{shrinkage}), is sufficient for identifiability (up to sign changes, which in our experience are not problematic in the MCMC sampler).

The utility of our orthonormality constraint is illustrated with the following result: 

\begin{lemma}\label{ftsLike2}
Under the identifiability constraint $\bm F' \bm F = \bm I_K$, the joint likelihood in \eqref{ftsVec} for $\{\beta_{k,t}\}$ is equivalent to the \emph{working likelihood} implied by
\begin{equation}\label{ftsVec2}
 \tilde Y_{k,t} =  \beta_{k,t} + \tilde \epsilon_{k,t},
\quad\tilde \epsilon_{k,t} \stackrel{indep}{\sim} N(0, \sigma_{\epsilon_t}^2)
\end{equation}
up to a constant that does not depend on $\beta_{k,t}$, where $ \tilde Y_{k,t} = \bm f_k' \bm Y_t $ and $\tilde \epsilon_{k,t} = \bm f_k' \bm \epsilon_t$. 
\end{lemma}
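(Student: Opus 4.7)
The plan is a direct algebraic comparison of the two likelihoods, using the orthonormality constraint $\bm F'\bm F = \bm I_K$ to eliminate the cross term, and then isolating the pieces that depend on $\{\beta_{k,t}\}$.

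First I would write the model in \eqref{ftsVec} in matrix form as $\bm Y_t = \bm F\bm\beta_t + \bm\epsilon_t$ with $\bm\beta_t = (\beta_{1,t},\ldots,\beta_{K,t})'$ and $\bm F = (\bm f_1,\ldots,\bm f_K)$, so that the joint likelihood is proportional to $\prod_{t=1}^T \sigma_{\epsilon_t}^{-M}\exp\!\bigl\{-\tfrac{1}{2\sigma_{\epsilon_t}^2}\,\|\bm Y_t - \bm F\bm\beta_t\|^2\bigr\}$. Expanding the quadratic form gives
\begin{equation*}
\|\bm Y_t - \bm F\bm\beta_t\|^2 = \bm Y_t'\bm Y_t - 2\bm\beta_t'\bm F'\bm Y_t + \bm\beta_t'\bm F'\bm F\bm\beta_t.
\end{equation*}
Here I invoke the identifiability constraint: $\bm F'\bm F = \bm I_K$ collapses the last term to $\bm\beta_t'\bm\beta_t = \sum_k \beta_{k,t}^2$. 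The cross term can be rewritten using $\tilde Y_{k,t} = \bm f_k'\bm Y_t$ as $\bm\beta_t'\bm F'\bm Y_t = \sum_k \beta_{k,t}\tilde Y_{k,t}$, while $\bm Y_t'\bm Y_t$ does not depend on $\bm\beta_t$.

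Next I would write out the working likelihood implied by \eqref{ftsVec2}, treating the $\tilde\epsilon_{k,t}$ as independent $N(0,\sigma_{\epsilon_t}^2)$ across $k,t$: it is proportional to $\prod_{k,t}\sigma_{\epsilon_t}^{-1}\exp\!\bigl\{-\tfrac{1}{2\sigma_{\epsilon_t}^2}(\tilde Y_{k,t}-\beta_{k,t})^2\bigr\}$. Expanding the square and summing over $k$ yields, for each $t$,
\begin{equation*}
\sum_{k=1}^K (\tilde Y_{k,t}-\beta_{k,t})^2 \;=\; \sum_k \tilde Y_{k,t}^2 \;-\; 2\sum_k \beta_{k,t}\tilde Y_{k,t} \;+\; \sum_k \beta_{k,t}^2.
\end{equation*}
Comparing this expression with the expansion of $\|\bm Y_t - \bm F\bm\beta_t\|^2$ above, the two $\bm\beta_t$-dependent terms agree exactly: both contribute $-2\sum_k \beta_{k,t}\tilde Y_{k,t} + \sum_k \beta_{k,t}^2$, divided by $2\sigma_{\epsilon_t}^2$. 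The remaining pieces---$\bm Y_t'\bm Y_t$ versus $\sum_k \tilde Y_{k,t}^2 = \bm Y_t'\bm F\bm F'\bm Y_t$, together with the differing normalizing constants $\sigma_{\epsilon_t}^{-M}$ versus $\sigma_{\epsilon_t}^{-K}$---are free of $\{\beta_{k,t}\}$.

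There is no real obstacle here; the only subtle point is recognizing that the lemma claims equivalence only \emph{as a function of $\{\beta_{k,t}\}$}, not as a full joint density, so the residual discrepancy $\bm Y_t'(\bm I_M - \bm F\bm F')\bm Y_t$ (the squared norm of the component of $\bm Y_t$ orthogonal to the column space of $\bm F$) and the mismatched powers of $\sigma_{\epsilon_t}$ are absorbed into the $\beta_{k,t}$-free constant mentioned in the statement. I would close the proof by noting that this constant depends on $\bm Y_t$, $\bm F$, and $\sigma_{\epsilon_t}$ but not on any $\beta_{k,t}$, so all full-conditional updates and posterior inferences for $\{\beta_{k,t}\}$ may legitimately be carried out using the $K$-dimensional projected responses $\tilde Y_{k,t} = \bm f_k'\bm Y_t$ in place of the $M$-dimensional $\bm Y_t$.
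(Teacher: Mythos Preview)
Your proposal is correct and follows essentially the same approach as the paper: expand the quadratic form $\|\bm Y_t - \bm F\bm\beta_t\|^2$, invoke $\bm F'\bm F = \bm I_K$ to reduce $\bm\beta_t'\bm F'\bm F\bm\beta_t$ to $\bm\beta_t'\bm\beta_t$, and observe that the $\bm\beta_t$-dependent terms coincide with those of the working likelihood while the remainder is absorbed into a $\beta$-free constant. The paper in fact states (but does not spell out) exactly this expansion in its more general appendix Lemma for arbitrary $\bm\Sigma_{\epsilon_t}$, of which the present lemma is the diagonal special case.
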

For sampling the factors $\beta_{k,t}$ (and associated parameters), we only need the likelihood \eqref{ftsVec2}, which only depends on $M$ via the projection $ \tilde Y_{k,t} = \bm f_k' \bm Y_t$. The projection step is a one-time cost (per MCMC iteration). As a result, the model complexity for the dynamic components in \eqref{reg} and \eqref{evol} is \emph{not} severely limited by the dimension of the functional data, $M$, nor the correlations among the components of $\bm Y_t$, which are often large for functional data. These computational simplifications afford us the ability to incorporate the complex dynamics in \eqref{reg}-\eqref{evol} without sacrificing computational feasibility (see Section \ref{MCMC} for an example).  

 As an empirical illustration, Table \ref{table:comp} gives computation times for simulated data from Section \ref{simulations} for the proposed DFOSR model compared to \cite{kowal2017bayesian} (defined as DFOSR-NIG in Section \ref{simulations}). Notably, \cite{kowal2017bayesian}  use a similar model for $\{f_k\}$,  but do \emph{not} use the identifiability constraint $\bm F' \bm F = \bm I_K$ to produce the simplifications in Lemma \ref{ftsLike2}. The improvements are substantial, particularly for the larger sample size.

\begin{table}[h!]
\centering
\begin{tabular}{ c | c | c }
MCMC Algorithm  & $T=50, M = 20$ &  $T=200, M = 100$\\ 
  \hline
 Proposed DFOSR & 48 seconds  & 3 minutes \\  
 \cite{kowal2017functional} & 15 minutes & 74 minutes
\end{tabular}
\caption{Computing times per 1000 MCMC iterations (implemented in \texttt{R} on a MacBook Pro, 2.7 GHz Intel Core i5). In all cases, $p=15$ and $K=6$. }
\label{table:comp}
\end{table}


For each  $\bm f_k$, the orthonormality constraint may be decomposed into two sets of constraints: the linear constraints $\bm f_\ell' \bm f_k = 0$ for $\ell \ne k$ and the unit-norm constraint, $|| \bm f_k ||^2 = 1$. Since the sampler in Section \ref{genBasis} conditions on $\{\bm f_\ell\}_{\ell \ne k}$, the linearity constraint is fixed for each $\bm f_k = \bm B \bm \psi_k$. Therefore, given the full conditional distribution $\left[\bm \psi_k | \cdots\right] \sim N\left(\bm Q_{\psi_k}^{-1} \bm \ell_{\psi_k}, \bm Q_{\psi_k}^{-1}\right)$, we enforce the linear orthogonality constraint by \emph{conditioning} on $\bm C_k \bm \psi_k = \bm 0$, where $\bm C_k = (\bm f_1, \ldots, \bm f_{k-1}, \bm f_{k+1}, \ldots, \bm f_K)' \bm B = (\bm \psi_1, \ldots, \bm \psi_{k-1}, \bm \psi_{k+1}, \ldots, \bm \psi_K)' \bm B' \bm B$. Conditioning on the constraint is particularly interpretable in a Bayesian setting, and produces desirable optimality properties for constrained penalized regression (see Theorem 1 of \citealp{kowal2017bayesian}). Since the full conditional distribution for $\bm \psi_k$ is Gaussian, conditioning on $\bm C_k \bm \psi_k = \bm 0$ produces a Gaussian distribution with easily computable mean and covariance. Sampling from the constrained distribution is straightforward and efficient: given a draw from the unconstrained posterior, say $\bm\psi_k^0 \sim N\left(\bm Q_{\psi_k}^{-1} \bm \ell_{\psi_k}, \bm Q_{\psi_k}^{-1}\right)$, we retain the vector $\bm \psi_k^* = \bm \psi_k^0 - \bm Q_{\psi_k}^{-1} \bm C_k' \left(\bm C_k\bm Q_{\psi_k}^{-1}\bm C_k'\right)^{-1}\bm C_k \bm \psi_k^0$.
Given the orthogonally-constrained sample $\bm f_k^* = \bm B \bm \psi_k^*$, we rescale to enforce the unit-norm constraint: $\bm f_k = \bm f_k^*/||\bm f_k^*||$, and similarly rescale $\bm \psi_k^*$. This rescaling does not change the shape of the loading curve $\bm f_k$, and can be counterbalanced by an equivalent rescaling of the corresponding factor, i.e., $\beta_{k,t} \leftarrow \beta_{k,t} ||\bm f_k^*||$, to preserve exactly the likelihood \eqref{ftsVec}. By applying this procedure iteratively for $k=1,\ldots,K$, the  constraint $\bm F'\bm F = \bm I_K$ is satisfied for 
 \emph{every} MCMC iteration.

\section{Shrinkage Priors for the Model} \label{shrinkage}
While the DFOSR \eqref{fts}-\eqref{evol} is highly flexible, it is also overparametrized: it is unlikely that the regression coefficients $\alpha_{j,k,t}$ change substantially for \emph{all} times $t$, or that every predictor $x_{j,t}$ has a strong association with the functional response $Y_t$. Careful choices of priors for $\sigma_{\eta_{k,t}}^2$ and $\sigma_{\omega_{j,k,t}}^2$ offer shrinkage toward simpler models, which often improves estimation accuracy and reduces variability (see Section \ref{simulations}). We propose nested \emph{horseshoe priors} \citep{carvalho2010horseshoe} for shrinkage toward locally-static regression models with fewer predictors, and  \emph{multiplicative gamma process} priors \citep{bhattacharya2011sparse} for ordered shrinkage across factors $k=1,\ldots,K$, which reduces the sensitivity to the choice of $K$.  In (non-functional) time-varying parameter regression, shrinkage priors offer improvements in prediction and provide narrower posterior credible intervals (e.g., \citealp{kowal2017dynamic}). 

For the dynamic regression coefficient innovations $\omega_{j,k,t} \stackrel{indep}{\sim} N(0, \sigma_{\omega_{j,k,t}}^2)$, we encourage shrinkage at multiple levels with the following hierarchy of half-Cauchy distributions:
\begin{equation}\label{shrink}
\sigma_{\omega_{j,k,t}} \stackrel{ind}{\sim} C^+(0, \lambda_{j,k}), \quad \lambda_{j,k} \stackrel{ind}{\sim} C^+(0, \lambda_{j}), \quad \lambda_{j} \stackrel{ind}{\sim} C^+(0, \lambda_{0}), \quad \lambda_{0} \stackrel{ind}{\sim} C^+(0, 1/\sqrt{T-1})
\end{equation}
First, $\sigma_{\omega_{j,k,t}} \approx 0$ implies that $|\omega_{j,k,t}| \approx 0$, so $\alpha_{j,k,t} \approx \alpha_{j,k,t-1}$ is locally constant. Each $\alpha_{j,k,t}$ for predictor $j$ and factor $k$ may vary at any time $t$, but the prior encourages most changes to be approximately negligible, which implies fewer effective parameters in the model. The shrinkage parameters $\lambda_{j,k}$ and $\lambda_j$ are common for all times $t$, and provide factor- and predictor-specific shrinkage: for each predictor $j$, $\lambda_{j,k}$ allows some factors $k$ to be nonzero, while $\lambda_j$ operators as a group shrinkage parameter that may effectively remove predictor $j$ from the model.  Lastly, the global shrinkage parameter $\lambda_0$ controls the global level of sparsity, and is scaled by $1/\sqrt{T-1}$ following \cite{piironen2016hyperprior}. In the case of the non-dynamic FOSR and FOSR-AR models, we simply remove one level of the hierarchy: $\omega_{j,k,t} \stackrel{indep}{\sim} N(0, \lambda_{j,k}^2)$. The simulation analysis of Section \ref{simulations} clearly demonstrate the importance of these shrinkage priors, particularly for time-varying parameter regression.

The multiplicative gamma process (MGP) provides ordered shrinkage with respect to factor $k$, which suggests that factors with larger $k$ explain less variability in the data, and effectively reduces sensitivity to the choice of the total number of factors, $K$. We assume MGP priors for the intercept terms $\mu_k \stackrel{indep}{\sim} N(0, \sigma_{\mu_k}^2)$, which are given by the prior on the precisions, $\sigma_{\mu_k}^{-2} = \prod_{\ell \le k} \delta_{\mu_\ell}$, where $ \delta_{\mu_1} \sim \mbox{Gamma}(a_{\mu_1}, 1)$ and $ \delta_{\mu_\ell} \sim \mbox{Gamma}(a_{\mu_2}, 1)$ for $\ell > 1$. As discussed in \cite{bhattacharya2011sparse}  and  \cite{durante2017note}, selecting $a_{\mu_1} > 0$ and $a_{\mu_2} \ge 2$ produces stochastic ordering among the implied variances $\sigma_{\mu_k}^2$, which also satisfies the ordering requirement for model identifiability. Similarly, for the innovations $\eta_{k,t} \sim N(0, \sigma_{\eta_{k,t}}^2)$ we follow  \cite{bhattacharya2011sparse} and \cite{montagna2012bayesian} and let  $\sigma_{\eta_{k,t}}^2 = \sigma_{\eta_k}^2/\xi_{\eta_{kt}}$  with $\sigma_{\eta_k}^{-2} = \prod_{\ell \le k} \delta_{\eta_\ell}$, $ \delta_{\eta_1} \sim \mbox{Gamma}(a_{\eta_1}, 1)$, $ \delta_{\eta_\ell} \sim \mbox{Gamma}(a_{\eta_2}, 1)$ for $\ell > 1$, and $\xi_{\eta_{kt}} \stackrel{iid}{\sim} \mbox{Gamma}(\nu_\eta/2, \nu_\eta/2)$. We allow the data to determine the rate of ordered shrinkage separately for $\{\mu_k\}$ and $\{\eta_{k,t}\}$ using the hyperpriors $a_{\mu_1}, a_{\mu_2}, a_{\eta_1},a_{\eta_2} \stackrel{iid}{\sim}\mbox{Gamma}(2,1)$. Finally, the hyperprior $\nu_\eta \sim \mbox{Uniform}(2, 128)$ for the degrees of freedom parameter incorporates the possibility of heavy tails in the marginal distribution for $\eta_{k,t}$.


\section{Simulations}\label{simulations}

\subsection{Simulation Design}
We conducted an extensive simulation study in order to characterize the performance of the proposed methods relative to state-of-the-art alternatives for functional regression and assess the relative importance of our modeling choices, including the model for the loading curves in \eqref{fts}, the time-varying parameter regression in \eqref{reg}-\eqref{evol}, and the shrinkage priors in Section \ref{shrinkage}. We consider simulation designs with \emph{dynamic} and \emph{non-dynamic} regression coefficients and different sample sizes: a small sample with $T = 50$ time points and $M = 20$ observation points, and a large sample with $T = 200$  and $M = 100$. 

We incorporate two sources of sparsity in the regression: (i) some predictors are not associated with the functional response $Y_t(\bm\tau)$ and (ii) some predictors are associated with $Y_t(\bm \tau)$ exclusively via a small number of factors. We fix $p_0 = 10$ regression coefficients to be exactly zero (for all times $t$), and let $p_1 = 5$ be nonzero, resulting in $p = p_0 + p_1 =  15$ regression coefficients (plus an intercept). For each nonzero predictor $j=1,\ldots, p_1 = 5$,  we uniformly sample $p_j^*$ factors to be nonzero, where $p_j^* \stackrel{iid}{\sim} \mbox{Poisson}(1)$ truncated to $[1, K^*]$. For dynamic regression coefficients, we simulate the nonzero factors $k$ for predictor $j$ from a Gaussian random walk with randomly selected jumps:  
$\alpha_{j,k,t}^* = Z_{k,0} + \sum_{s \le t} Z_{k,s} I_{k,s}$
where $Z_{k,t} \stackrel{indep}{\sim} N(0, 1/k^2)$ and $I_{k,t}  \stackrel{iid}{\sim} \mbox{Bernoulli}(0.01)$, which results in  time-varying yet locally constant regression coefficients $\alpha_{j,k,t}^*$. For non-dynamic regression coefficients, we simulate $\alpha_{j,k}^*\stackrel{indep}{\sim} N(0, 1/k^2)$. For all cases, the predictors are simulated from $x_{j,t} \stackrel{iid}{\sim}N(0,1)$, and the intercepts are fixed at $\mu_k^* = 1/k$. Finally, the autoregressive errors are $\gamma_{k,t}^* = 0.8 \gamma_{k,t-1}^* +\eta_{k,t}^*$ and $\eta_{k,t}^*  \stackrel{indep}{\sim}N(0, [1 - 0.8^2]/k^2)$, which are highly correlated yet stationary with marginal standard deviation $1/k$.

For $M$ equally-spaced points $\bm \tau \in [0,1]$, the true loading curves are $f_1^*(\bm\tau) = 1/\sqrt{M}$ and for $k = 2,\ldots, K^* = 4$, $f_k^*$ is an orthogonal polynomial of degree $k$. Given true factors $\beta_{k,t}^* = \mu_k^* + \sum_{j=1}^p  x_{j,t} \alpha_{j,k,t}^* + \gamma_{k,t}^*$  and loading curves $f_k^*(\bm \tau)$, the true curves are $Y_t^*(\bm \tau) = \sum_{k=1}^{K^*} f_k^*(\bm \tau) \beta_{k,t}^*$ and the functional data are simulated from $Y_t(\bm \tau) = Y_t^*(\bm \tau) + \sigma^* \epsilon_t^*(\bm\tau)$, where $\epsilon_t^*(\bm\tau)\stackrel{iid}{\sim}N(0, 1)$. After selecting a \emph{root-signal-to-noise ratio} (RSNR),  the observation error standard deviation is $\sigma^* =  \sqrt{\frac{\sum_{t=1}^T \sum_{j=1}^M (Y_t^*(\bm \tau_j) - \bar Y^*)^2}{TM - 1}}\Big/\mbox{RSNR}$ where $\bar Y^*$ is the sample mean of $\{Y_t^*(\bm \tau_j)\}_{j,t}$. We select RNSR = 5, which produces moderately noisy functional data.

\subsection{Methods For Comparison}
We consider two variations of the proposed methodology: the  DFOSR model \eqref{fts}-\eqref{evol}  (DFOSR-HS) and the non-dynamic analog with 
$\alpha_{j,k,t}=  \alpha_{j,k}$ (FOSR-AR), both with $K = 6 > K^*  = 4$ to include more factors than necessary. We consider an alternative DFOSR model with normal-inverse-gamma innovations (DFOSR-NIG), i.e., we replace the horseshoe priors in \eqref{shrink} with $\sigma_{\omega_{j,k}}^{-2} \stackrel{iid}{\sim} \mbox{Gamma}(0.001, 0.001)$. Originally proposed by  \cite{kowal2017bayesian}, this model does \emph{not} provide aggressive shrinkage with respect to time $t$, predictor $j$, or factor $k$, but otherwise retains the proposed DFOSR model characteristics. Next, to study the importance of estimating the loading curves $f_k$, we implement a variation of  DFOSR-NIG in which the loading curves $f_k$ are estimated {\it a priori} as functional principal components using  \cite{xiao2013fast}, where  $K$ is selected to explain 99\% of the variability in $\{Y_t^*(\bm \tau_j)\}_{j,t}$. For this method (Dyn-FPCA), we remove the ordered shrinkage by specifying $\mu_k \stackrel{iid}{\sim}N(0,100^2)$ and normal-inverse-gamma priors for $\eta_{k,t}$ in \eqref{reg} and $\omega_{j,k,t}$ in \eqref{evol}. 
Among existing FOSR methods, we include \cite{reiss2010fast}, which is a FOSR estimated using least squares (FOSR-LS), and \cite{barber2017function}, which is a FOSR with a group lasso penalty on each regression function (FOSR-Lasso), both implemented using the \texttt{refund} package in \texttt{R} \citep{refund}. 
These methods are non-Bayesian, and do not account for time-varying regression coefficients or autocorrelated errors (with respect to time).


\subsection{Simulation Results}
We compare methods using root mean squared errors of the dynamic regression coefficient functions, $\mbox{RMSE} = \sqrt{\frac{1}{pTM}\sum_{j=1}^p \sum_{t=1}^T \sum_{\ell = 1}^M (\tilde \alpha_{j,t}(\bm \tau_\ell) - \tilde \alpha_{j,t}^*(\bm \tau_\ell))^2}$, where $\tilde \alpha_{j,t}(\bm\tau_\ell)$ is the estimated regression coefficient for predictor $j$ at time $t$ and observation point $\bm \tau_\ell$ and $\tilde \alpha_{j,t}^*(\bm \tau_\ell) = \sum_{k=1}^{K^*} f_k^*(\bm\tau_\ell) \alpha_{j,k,t}^*$ is the true regression coefficient. For the Bayesian methods, we use the posterior expectation of $\tilde \alpha_{j,t}(\bm \tau_\ell)$ as our estimator. The RMSEs for the regression coefficients based on 50 simulations are in Figure \ref{fig:sims}. 

\begin{figure}[h]
\begin{center}
\includegraphics[width=.49\textwidth]{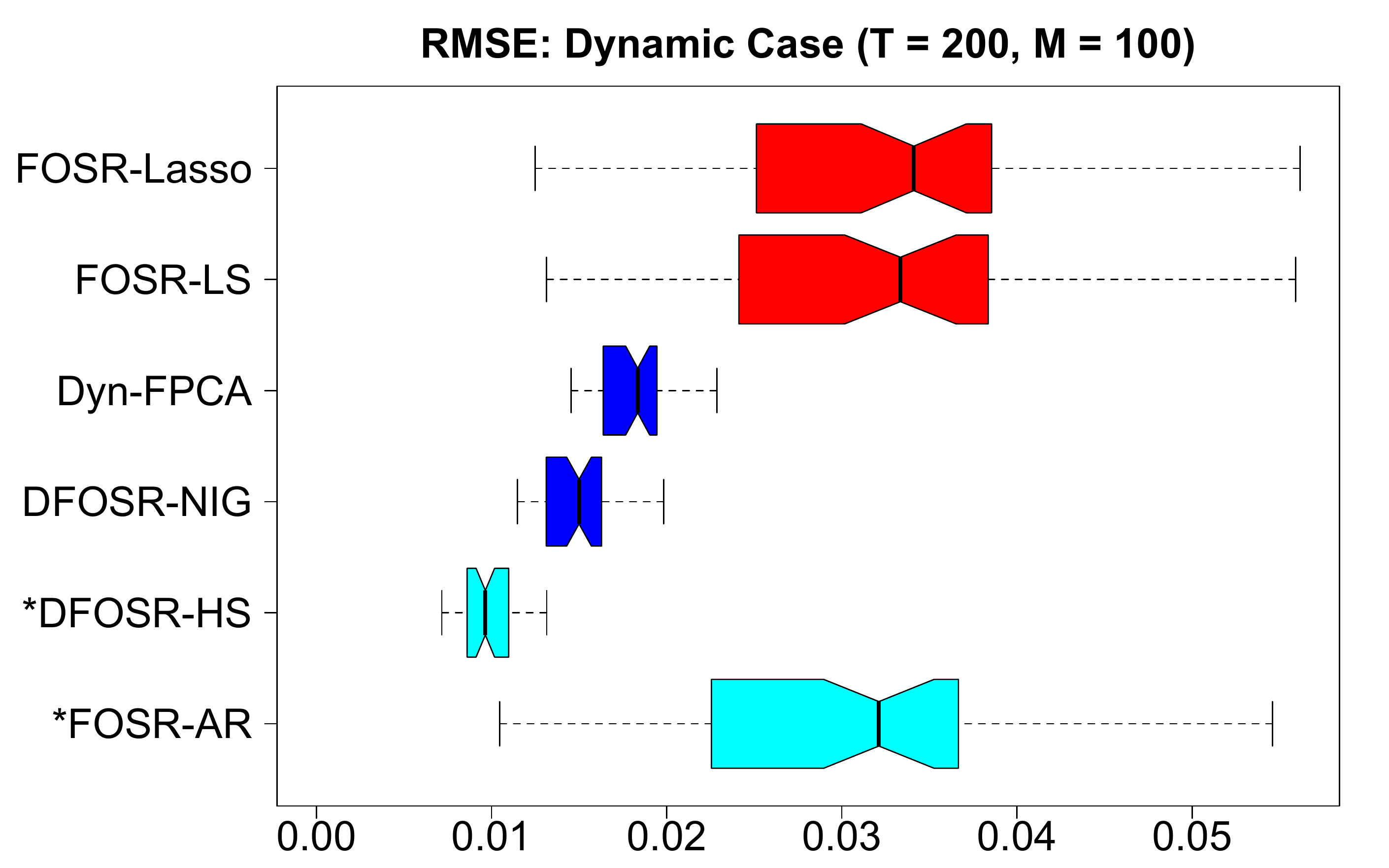}
\includegraphics[width=.49\textwidth]{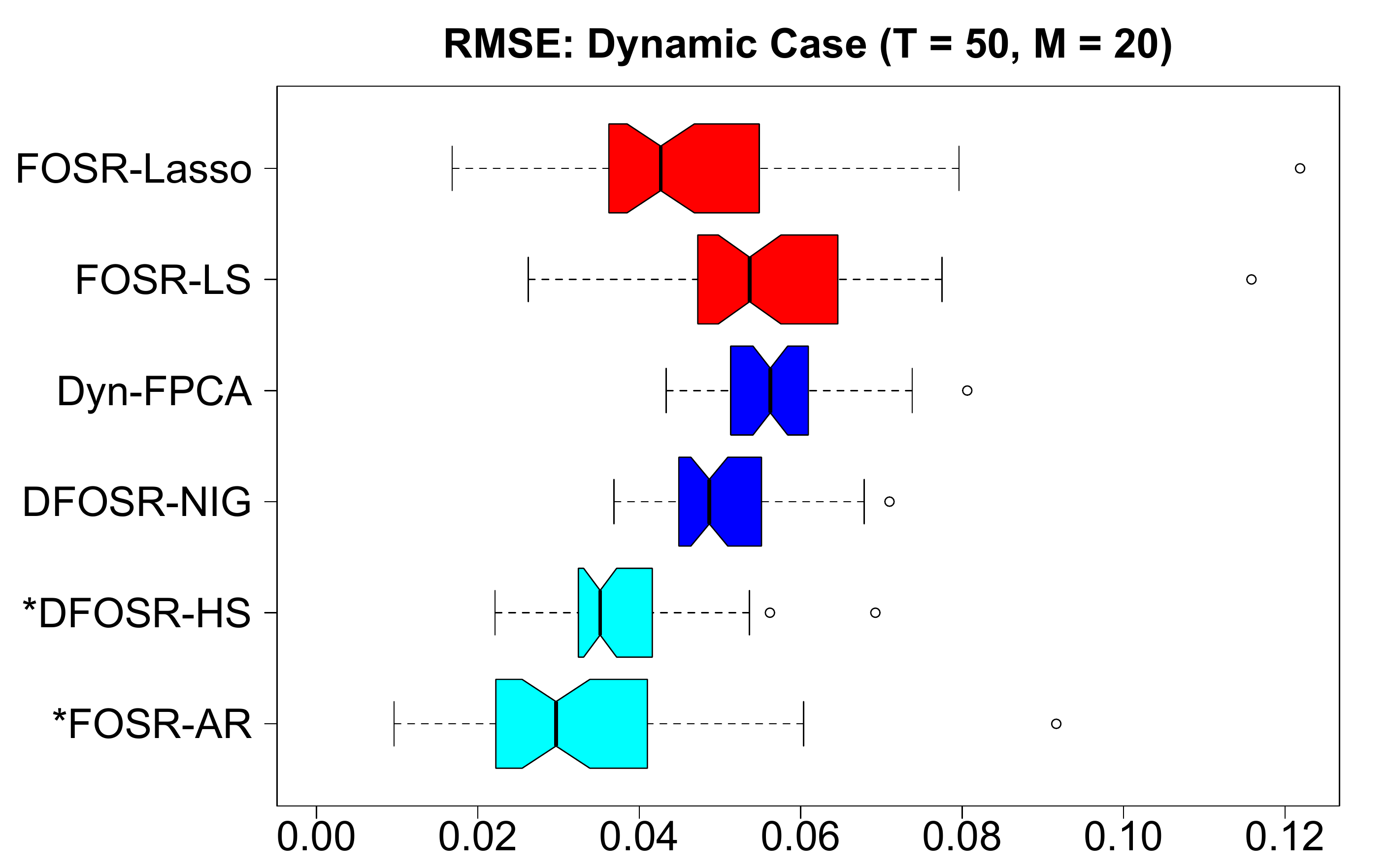}
\includegraphics[width=.49\textwidth]{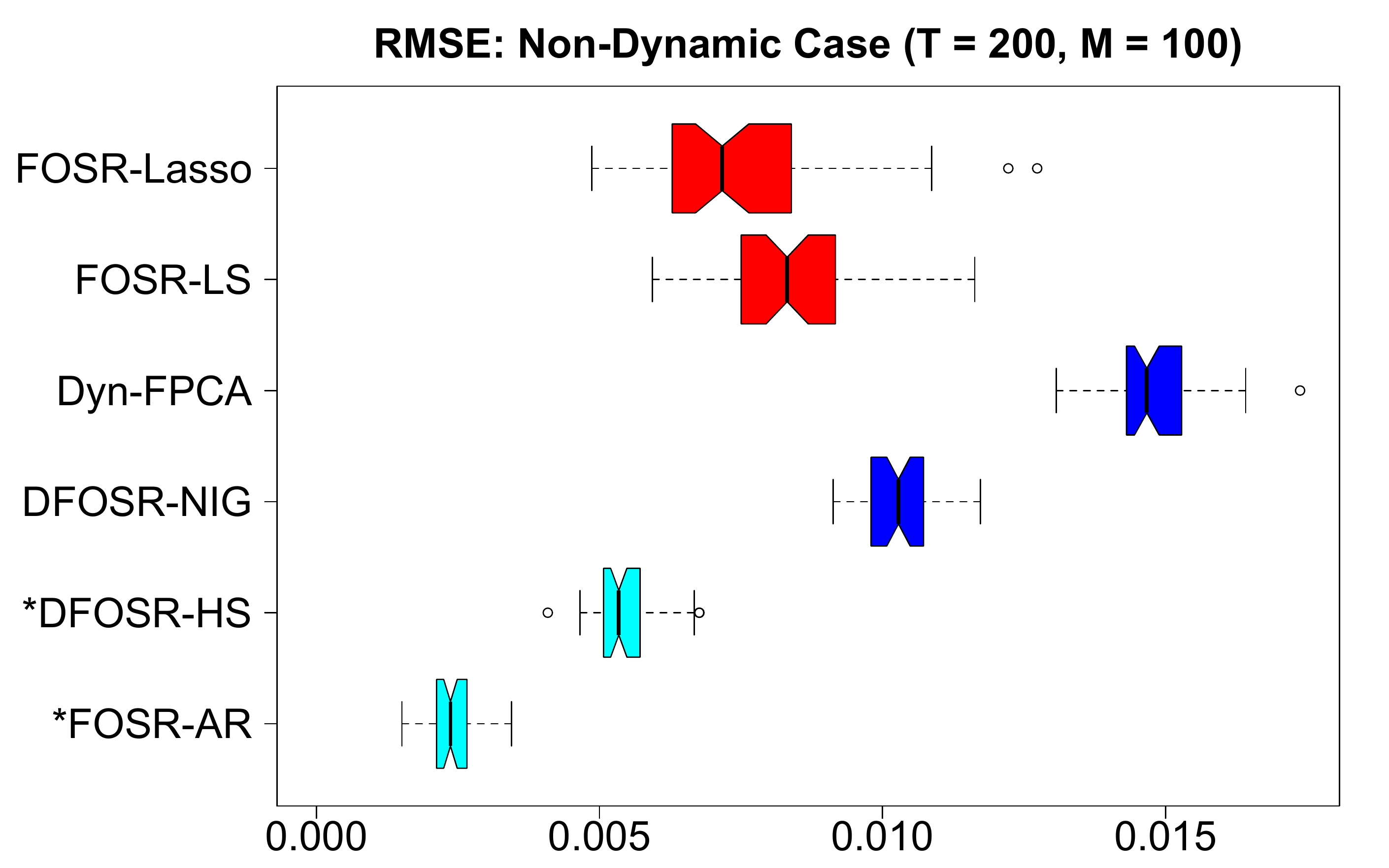}
\includegraphics[width=.49\textwidth]{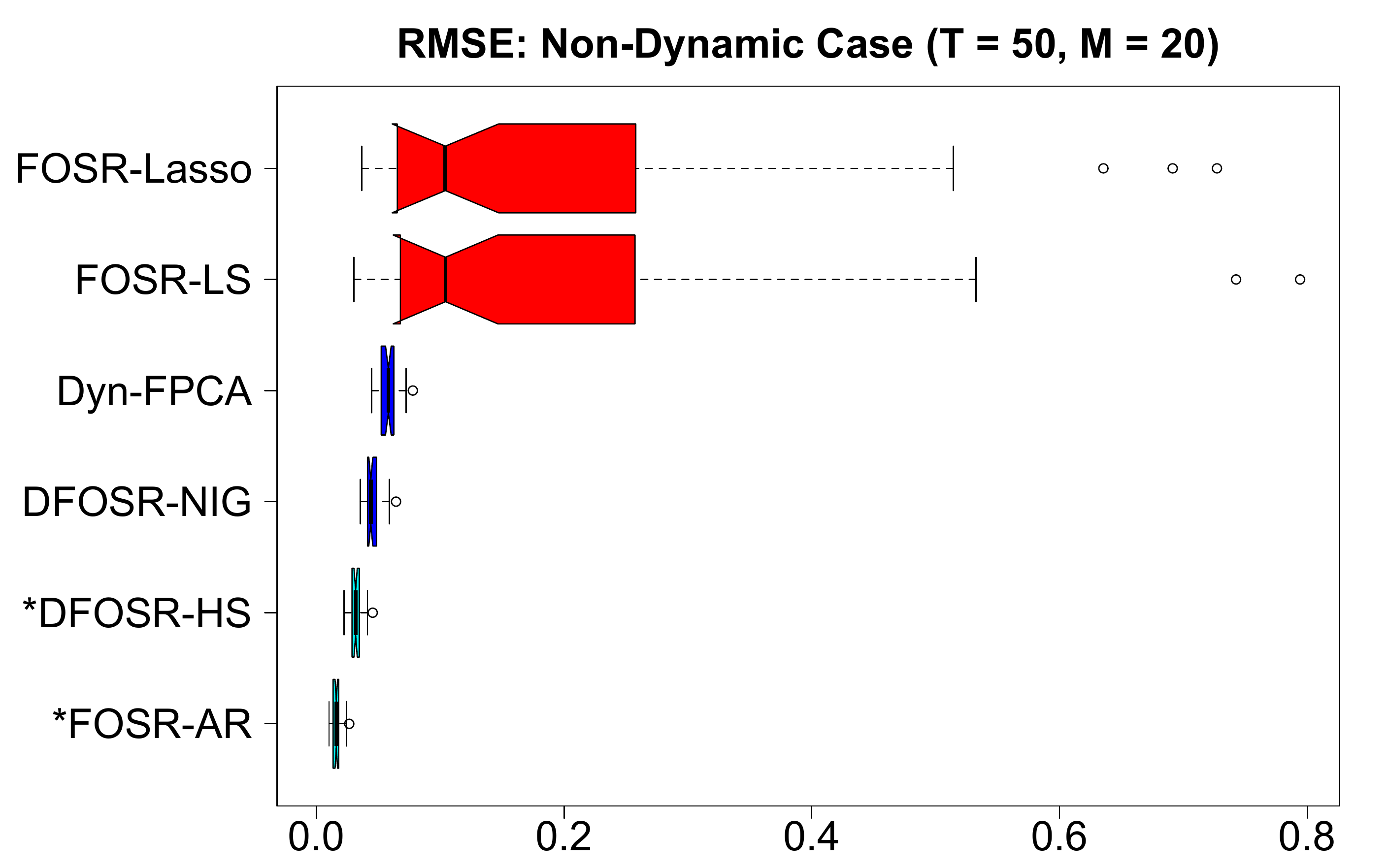}
\caption{Root mean squared errors for the regression coefficient functions $\tilde \alpha_{j,t}(\bm\tau)$ under different simulation designs: the dynamic case ({\bf top row}) and the non-dynamic case ({\bf bottom row}) for large {(\bf left column}) and small ({\bf  right column}) sample sizes. The proposed methods (DFOSR-HS and FOSR-AR) are marked with an asterisk and colored in light blue; simplifications of the proposed methods are in dark blue; and existing FOSR methods are in red. 
 \label{fig:sims}}
\end{center}
\end{figure}

In all cases, the proposed DFOSR-HS model performs better than existing methods, typically by a wide margin. Among time-varying parameter models, DFOSR-HS offers substantial improvements over DFOSR-NIG and Dyn-FPCA, which suggests that the shrinkage priors of Section \ref{shrinkage} are an important component of the DFOSR model. DFOSR-NIG is uniformly better than Dyn-FPCA, which demonstrates that our model for the loading curves $f_k$ in Section \ref{loadings} improves upon an FPCA-based approach. For the dynamic simulations, the comparative performance of these methods depends on the sample size: when $T=200$ and $M=100$, the time-varying parameter regression models (DFOSR-HS, DFOSR-NIG, and Dyn-FPCA) are clearly preferable, but when $T=50$ and $M=20$, only the proposed DFOSR-HS performs well among dynamic models, and the (non-dynamic) FOSR-AR performs best overall. For the non-dynamic simulations, FOSR-AR performs best followed by DFOSR-HS for both sample sizes. 

In addition, we compare \emph{mean credible interval widths} (MCIWs) for the time-varying parameter regression models (DFOSR-HS, DFOSR-NIG, and Dyn-FPCA) in Figure \ref{fig:sims-mciw}.  The MCIWs are defined as 
$ \mbox{MCIW} = \frac{1}{pTM}\sum_{j=1}^p \sum_{t=1}^T \sum_{\ell = 1}^M \left[\tilde \alpha_{j,t}^{(95)}(\bm \tau_\ell) - \tilde \alpha_{j,t}^{(5)}(\bm \tau_\ell)\right]
$ where $\tilde \alpha_{j,t}^{(95)}(\bm \tau_\ell)$ and  $\tilde \alpha_{j,t}^{(5)}(\bm \tau_\ell)$ are the 95\% and 5\% quantiles, respectively, of the posterior distribution for $\tilde \alpha_{j,t}(\bm \tau_\ell)$. In each case, the empirical coverage exceeds 96\%, which is more conservative than the 90\% nominal coverage. Notably, DFOSR-HS obtains substantially narrower credible intervals without sacrificing nominal coverage, which suggests greater power to detect functional associations.

\begin{figure}[h]
\begin{center}
\includegraphics[width=.49\textwidth]{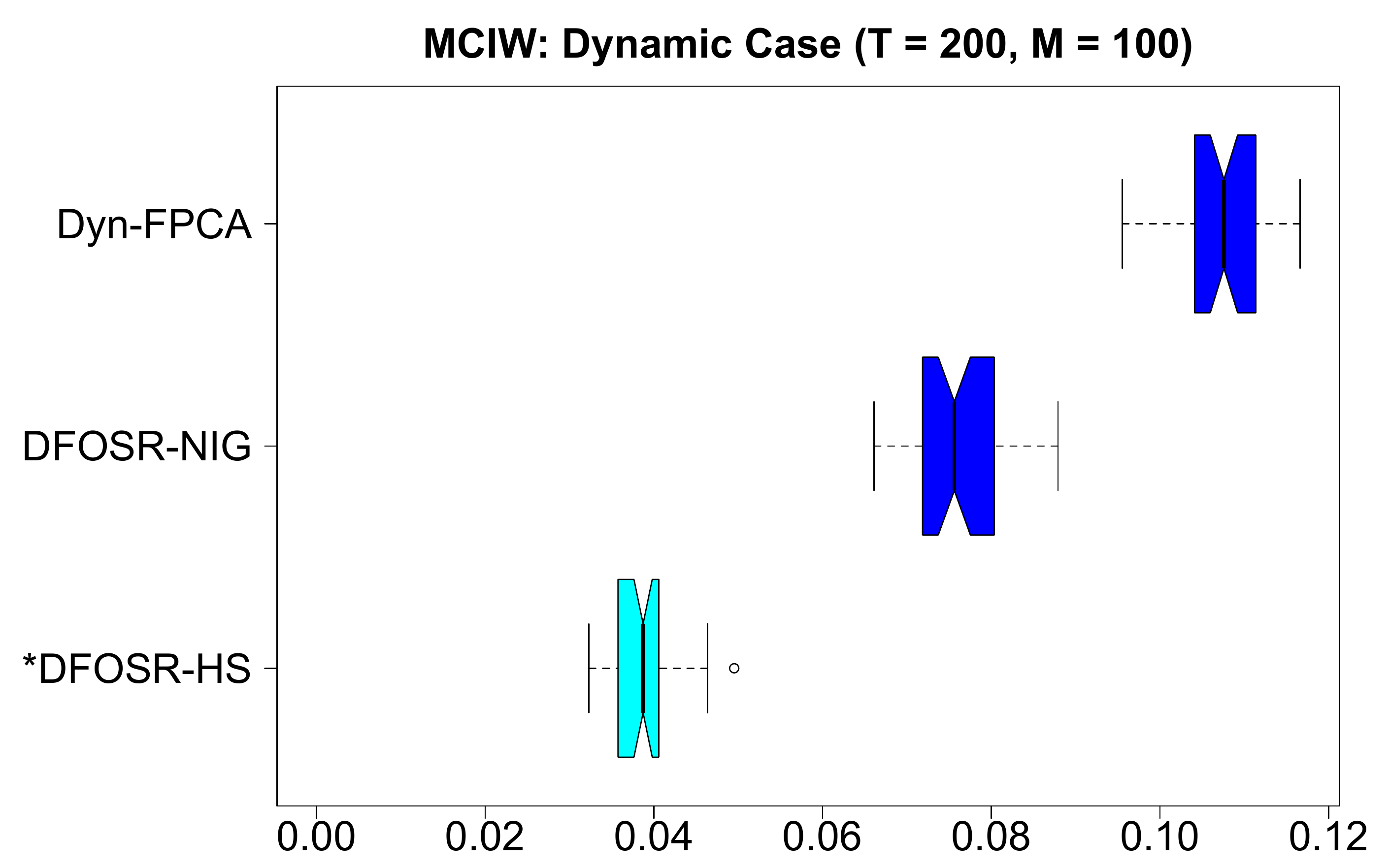}
\includegraphics[width=.49\textwidth]{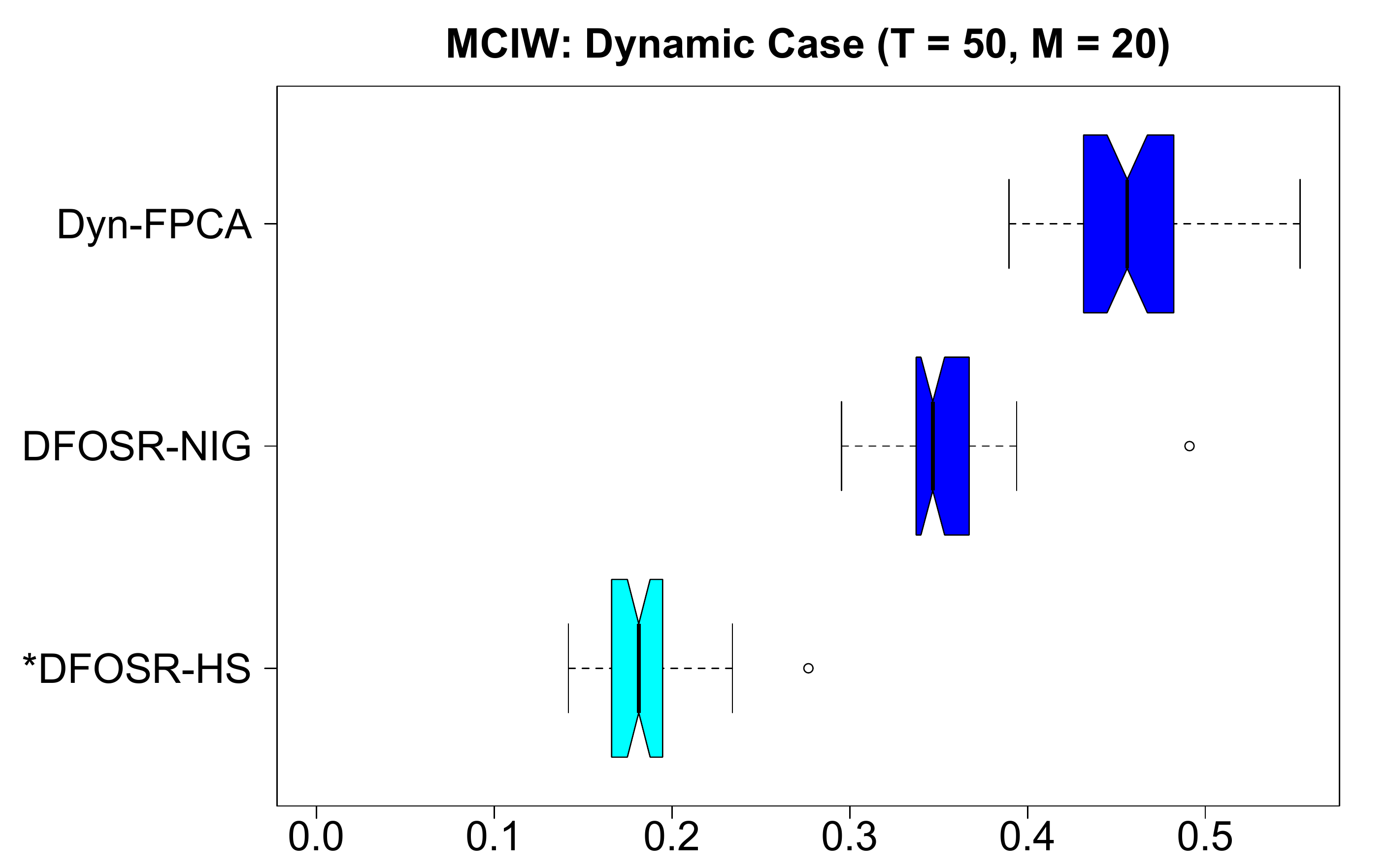}
\includegraphics[width=.49\textwidth]{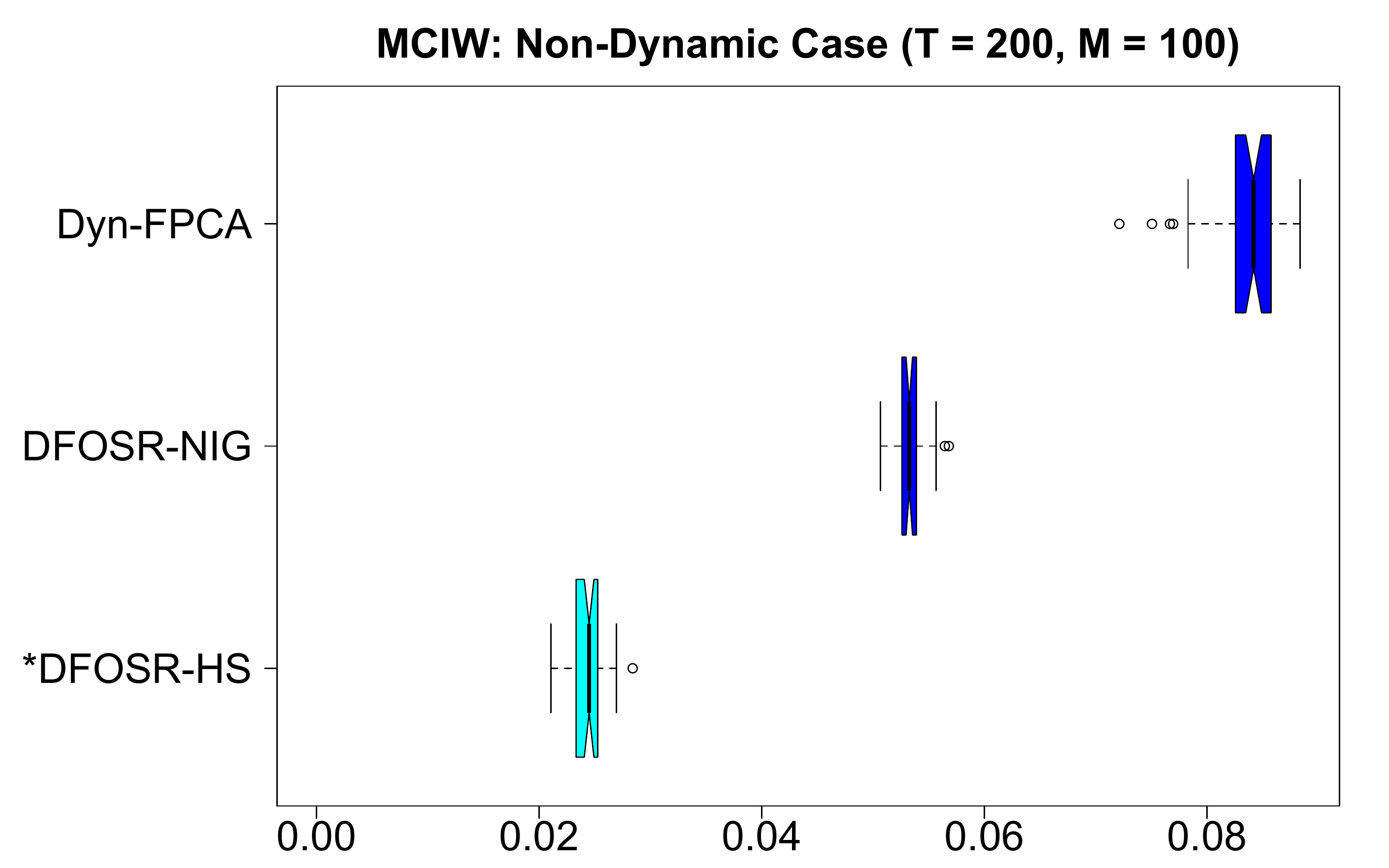}
\includegraphics[width=.49\textwidth]{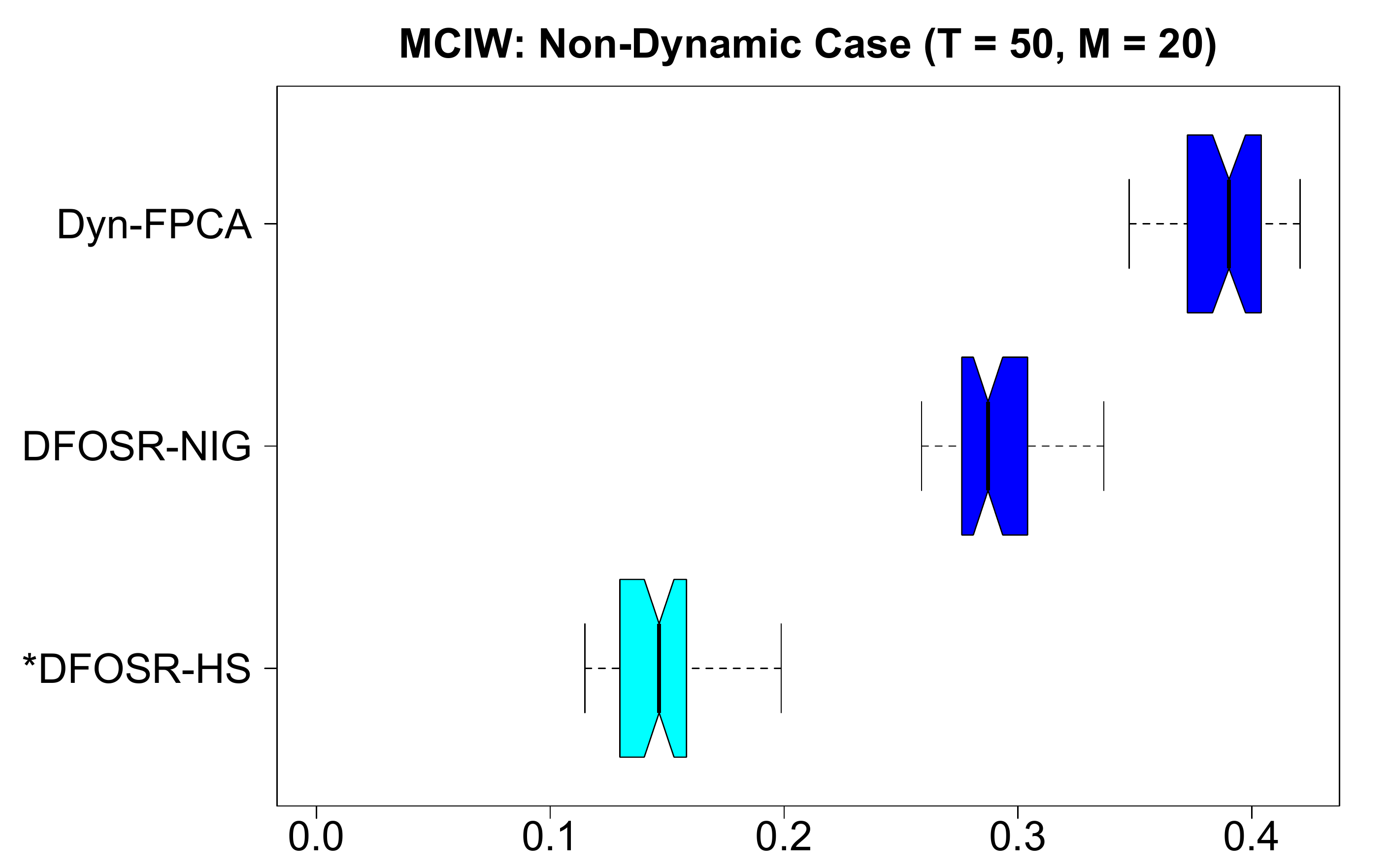}
\caption{Mean credible interval widths for the regression coefficient functions $\tilde \alpha_{j,t}(\bm\tau)$ under different simulation designs: the dynamic case ({\bf top row}) and the non-dynamic case ({\bf bottom row}) for large {(\bf left column}) and small ({\bf  right column}) sample sizes. The proposed method (DFOSR-HS) is marked with an asterisk. 
 \label{fig:sims-mciw}}
\end{center}
\end{figure}



\section{Macroeconomy and the Yield Curve}\label{yields}
The yield curves describes the time-varying term structure of interest rates: at each time $t$, the yield curve $Y_t(\bm\tau)$ characterizes how interest rates vary over the length of the borrowing period, or maturity, $\bm \tau$. Yield curves are an essential component in many economic and financial applications: they provide valuable information about economic and monetary conditions, inflation expectations, and business cycles, and are used to price fixed-income securities  and construct forward curves \citep{bolder2004empirical}. Due to these fundamental economic connections, we are interested in the associations between the yield curve and key macroeconomic variables, namely, real activity, monetary policy, and inflation. Importantly, the DFOSR modeling framework allows us to associate these variables with particular maturities $\bm\tau$ along the yield curve, and to study how the associations may change over time. 

Dynamic yield curve models commonly adopt the Nelson-Siegel parameterization \citep{nelson1987parsimonious}, usually within a state space framework \citep{diebold2006forecasting,diebold2006macroeconomy,koopman2010analyzing}. These parametric approaches are less flexible and introduce bias in estimation and forecasting, and often require solving computationally intensive nonlinear optimization problems. Nonparametric methods include \cite{FDFM} and \cite{SDFM}, but these approaches do not provide the uncertainty quantification, time-varying parameter regression, and shrinkage capabilities of the DFOSR model.

We obtain zero-coupon U.S. yield curve data from \cite{gurkaynak2007us}, which are pre-smoothed using \cite{svensson1994estimating} for $M=30$ maturities $\bm \tau_j \in \mathcal{T}_{obs} \equiv \{1,\ldots, 30\}$ years. The macroeconomic predictors are manufacturing capacity utilization (CU; \url{https://fred.stlouisfed.org/series/TCU}) for real activity, the federal funds rate (FFR; \url{https://fred.stlouisfed.org/series/FEDFUNDS}) for monetary policy, and (annualized) price inflation (PCE; \url{https://fred.stlouisfed.org/series/PCEPI}) for inflation, which are centered and scaled. We compute monthly averages of the yield curve data for common frequency with the macroeconomic variables, and consider the time period from January 1986 to February 2018  ($T = 386$). 

Within the DFOSR model \eqref{fts}-\eqref{evol}, we include a stochastic volatility model for $\sigma_{\epsilon_t}^2$ to incorporate volatility clustering, which is an important component in many financial and economic applications (see the Appendix for details and a supporting figure). In addition, we impose stationarity via the AR coefficient priors $\left[(\phi_k + 1)/2\right] \stackrel{iid}{\sim} \mbox{Beta}(5, 2)$. We report results for $K=6$, but larger values of $K$ produce nearly identical results. We ran the MCMC algorithm of Section \ref{MCMC} for 16000 iterations, discarded the first 10000 simulations as a burn-in, and retained every 3rd sample. Traceplots indicate good mixing and suggest convergence.

In Figure \ref{fig:surf_yields}, we plot the posterior expectation of the dynamic regression functions $\tilde\alpha_{j,t}(\bm\tau) = \sum_{k=1}^K f_k(\bm\tau) \alpha_{j,k,t}$ for CU, FFR, and PCE for all times $t$ and maturities $\bm \tau$. During the late 1980s and 1990s, CU appears to impact the curvature of the yield curve, with a prominent hump for maturities around 10 years, but this effect dissipates during the 2000s.  FFR has the largest estimated effect, almost entirely for small maturities, which impacts the slope of the yield curve. Notably, the FFR effect is mostly time-invariant during this period (1986-2018). PCE has a moderate impact on the slope of the yield curve---in the opposite direction of FFR---but only until the 1990s. 

\begin{figure}[h]
\begin{center}
\includegraphics[width=.32\textwidth]{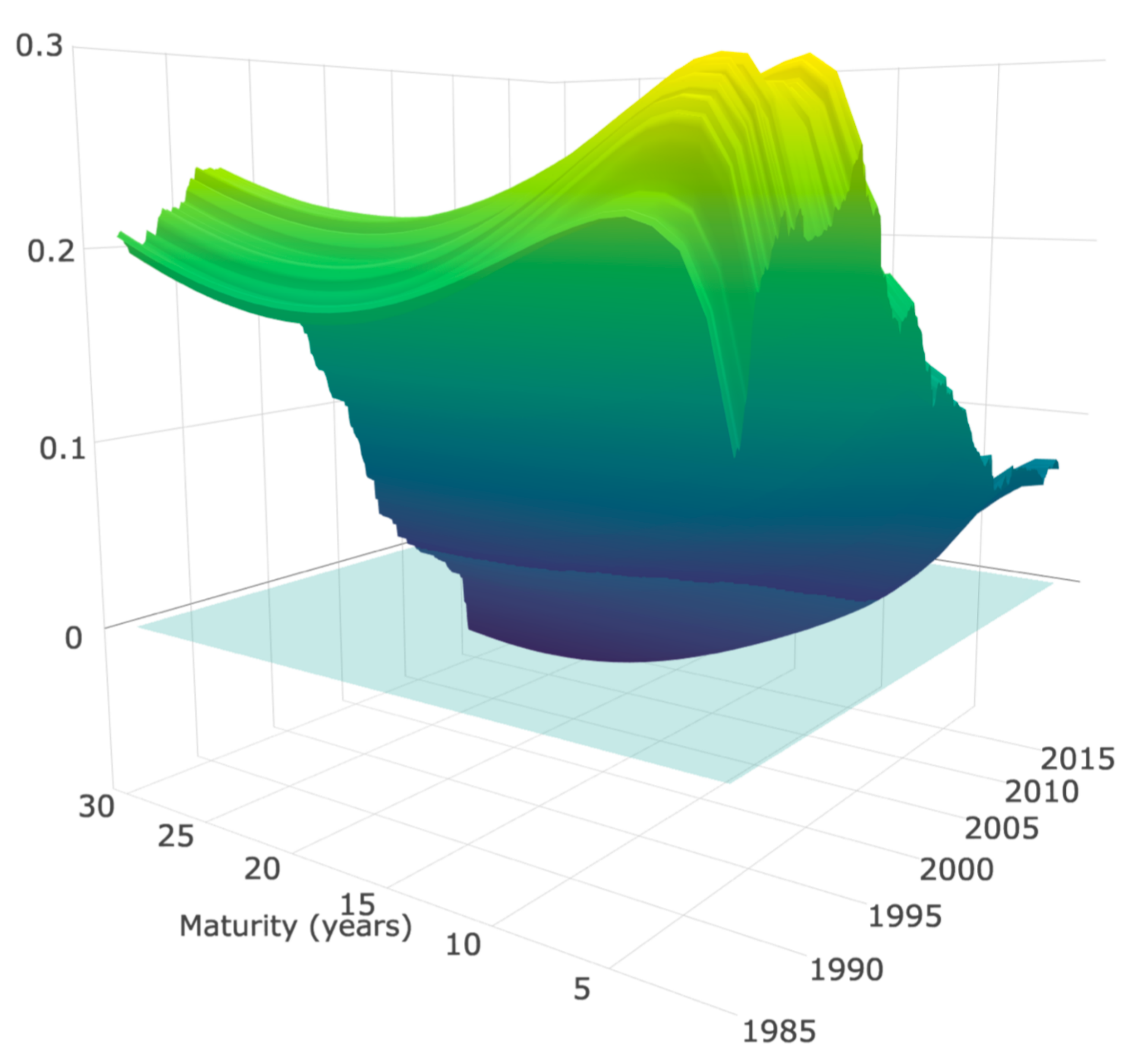}
\includegraphics[width=.32\textwidth]{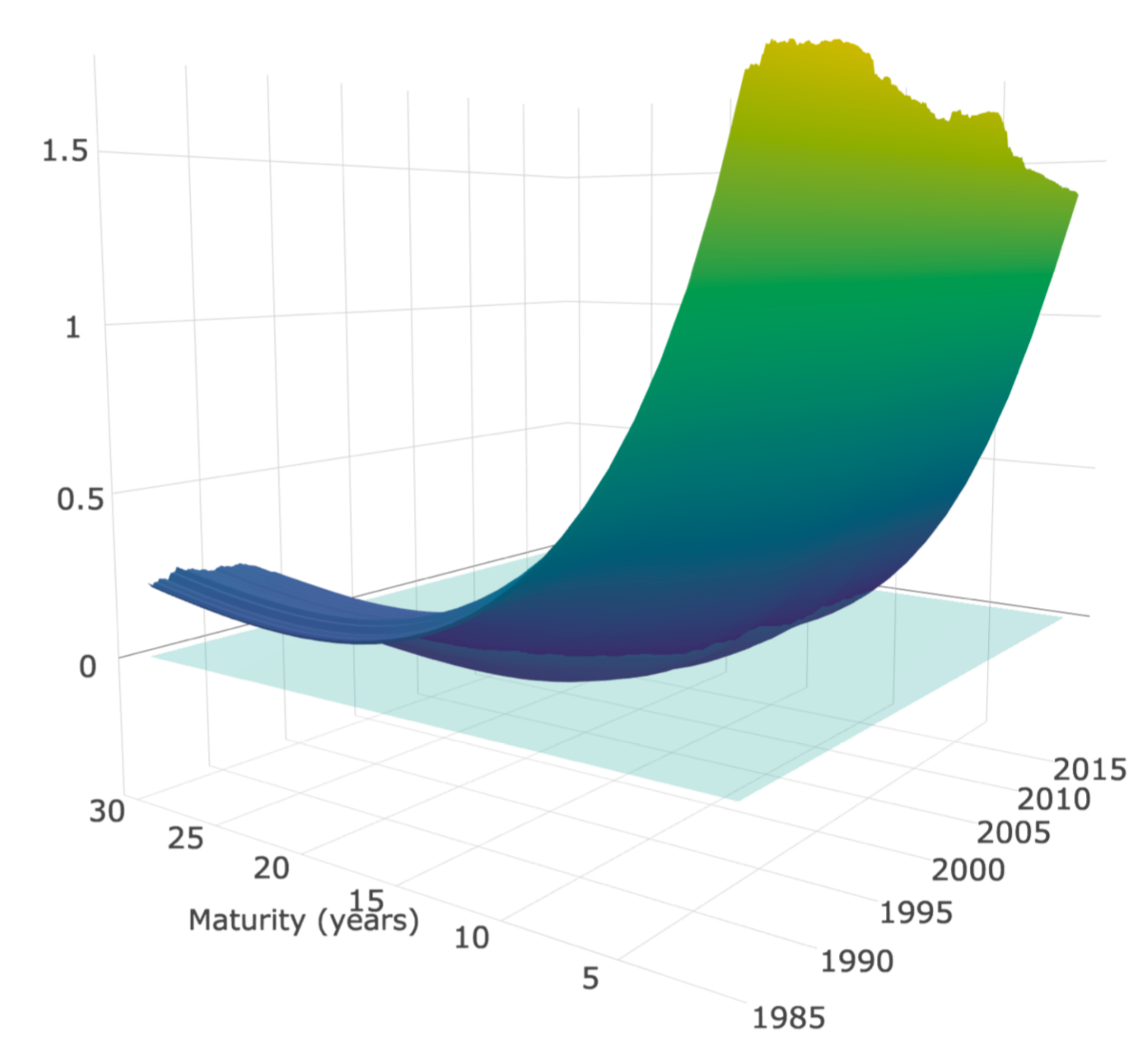}
\includegraphics[width=.32\textwidth]{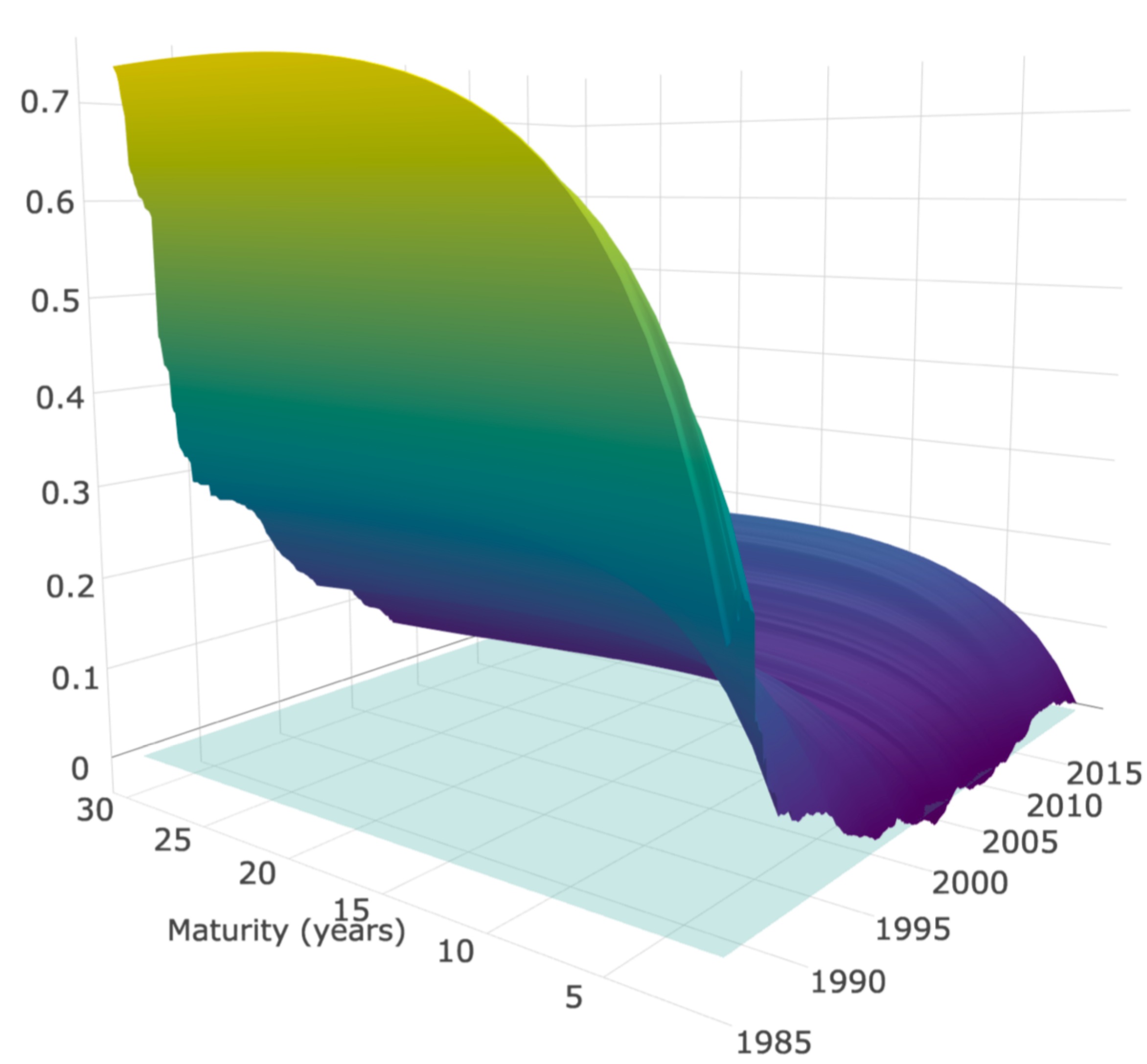}
\caption{Posterior expectation of the time-varying regression coefficient functions $\tilde\alpha_{j,t}(\bm\tau)$ for capacity utilization (CU, {\bf left}), federal funds rate (FFR, {\bf center}), and personal consumption expenditures (PCE, {\bf right}). The FFR has the largest estimated effect, particularly for smaller maturities. The impact of CU and PCE has declined substantially since the late 1980s. 
 \label{fig:surf_yields}}
\end{center}
\end{figure}

To further investigate these findings, Figure \ref{fig:curve_yields} presents the posterior expectations of $\tilde\alpha_{j,t}(\bm\tau)$ with 95\% pointwise credible intervals and simultaneous credible bands at select times $t$: March of 1986, 2002, and 2018. Naturally, the posterior expectations confirm the results in Figure \ref{fig:surf_yields}; however, the uncertainty quantification in Figure \ref{fig:curve_yields} offers additional insights. Notably, the width of the credible bands varies over time: the bands are widest in 1986 and most narrow in 2002, which reflects the dynamic adaptability of model \eqref{reg}-\eqref{evol} and the shrinkage priors of Section \ref{shrinkage}. The credible bands confirm the relative unimportance of CU as well as the clear association between FFR and yields for maturities of less than five years. Lastly, there is moderate evidence that PCE was associated with yields at longer maturities in 1986, but this effect vanished in more recent years. These results demonstrate the importance of incorporating both maturity-specific (functional) and time-varying (dynamic) effects in the model, which confirms the utility of the DFOSR model \eqref{fts}-\eqref{evol}.


\begin{figure}[h]
\begin{center}
\includegraphics[width=1\textwidth]{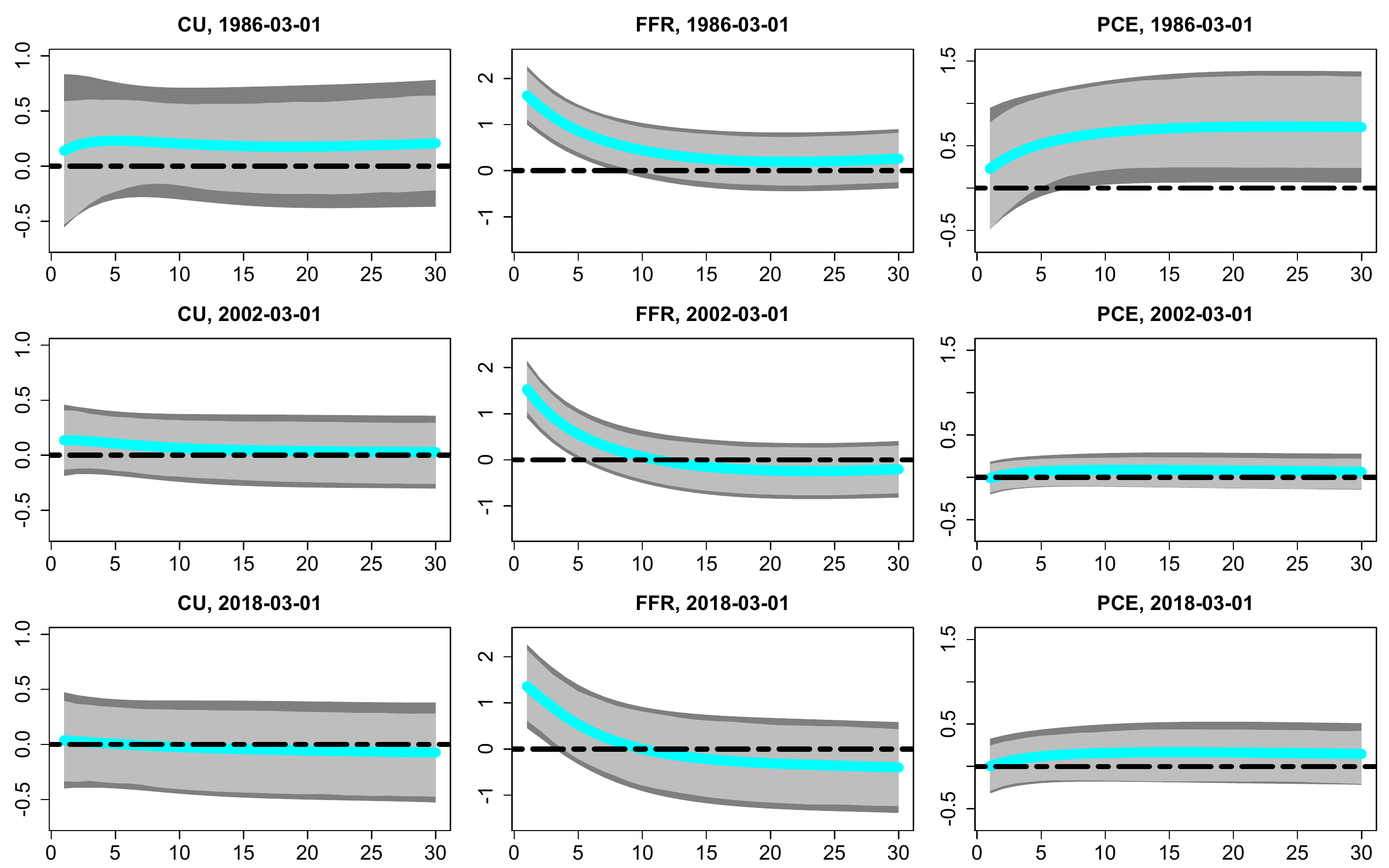}
\caption{Time-varying regression coefficient functions $\tilde\alpha_{j,t}(\bm\tau)$ as a function of maturity $\bm \tau$ (in years) for capacity utilization (CU, {\bf left}), federal funds rate (FFR, {\bf center}), and personal consumption expenditures (PCE, {\bf right}) in March of 1986 ({\bf top}), 2002 ({\bf middle}), and 2018 ({\bf bottom}). The posterior means (solid line) match the behavior in Figure \ref{fig:surf_yields}, but the posterior credible bands  (95\% pointwise intervals in light gray, 95\% simultaneous credible bands in dark gray) provide dynamic uncertainty quantification. 
 \label{fig:curve_yields}}
\end{center}
\end{figure}

\section{Age-Specific Fertility Rates in South and Southeast Asia}\label{fertility}
We analyze age-specific fertility rates (ASFRs) for developing nations in South and Southeast Asia. Fertility is an important determinant of the health and welfare of women, their families, and their communities, and is a key factor in global and national population growth. Fertility rates may vary greatly between developed and less developed nations, and may depend on socioeconomic and demographic factors such as age, education, employment, marital status, and access to family planning. While it is common for studies to use total fertility rates, which aggregate over all age groups, important patterns and trends in the fertility rate may only be discoverable using \emph{age-specific} fertility rates. The ASFR measures the annual number of births to women within a specific age group per 1000 women in that age group. Notably, equivalent total fertility rates may be attained using vastly different distributions of fertility among age groups (see \citealp{pantazis2018parsimonious}, Fig.\! 2). Naturally, the distribution of fertility among age groups is a fundamental determinant of future fertility rates and population sizes. Therefore, it is appropriate to model the ASFR as a functional time series: the fertility rate is a \emph{function} of age, and varies over \emph{time} (year).

A particular challenge in modeling ASFRs for developing nations is the sparsity of survey data. The Demographic and Health Surveys (DHS) of the United States Agency for International Development (USAID) aggregates available survey data, which may be accessed via STATcompiler \citep{casterline2010determinants}. We consider DHS survey data from 1994-2016 for 12 nations in South and Southeast Asia: Afghanistan, Bangladesh, Cambodia, India, Indonesia, Maldives, Myanmar, Nepal, Pakistan, Philippines, Timor-Leste, and Vietnam.  During this time period, four nations only have one available survey, and there are at most two surveys available each year; for years with two surveys, we use the average ASFRs. For each survey, the reported ASFR is the ASFR over the three years preceding the survey. 

The DHS survey data provides ASFRs for only a small number of age groups: 15-19, 20-24, 25-29, 30-34, 35-39, 40-44, 45-49. For modeling purposes, we use the midpoints of each age group, so the observation points are $\bm \tau_j \in \mathcal{T}_{obs} \equiv \{17, 22, 27, 32, 37, 42, 47\}$. Since we are interested in the age-specific fertility rates over the entire domain, $\mathcal{T} = [15, 49]$, we propose a model-based imputation approach to obtain estimates and inference for $M=31$ ages within the range of observed values: $\bm \tau = 17, \ldots,47$. In the Gibbs sampler, we draw $\left[Y_t(\bm \tau^*) | \{f_k\}, \{\beta_{k,t}\}, \sigma_{\epsilon}\right] \stackrel{indep}{\sim} N\big(\sum_k f_k(\bm \tau^*) \beta_{k,t}, \sigma_{\epsilon}^2\big)$ for each unobserved $\bm \tau^* \not\in \mathcal{T}_{obs}$, which provides (i) model-based interpolated fertility rate curves with posterior credible bands and (ii) inference for regression functions over a denser grid of points. 

In addition to the dynamic and functional aspects of ASFR data, we are interested in modeling the association between age-specific fertility and important socioeconomic and demographic predictor variables. In particular, we include the following predictor variables for each year $t$, provided by DHS and accessed via STATcompiler: (i) the percentage of currently married or in union women currently using any method of contraception, (ii) the median age of first marriage or union in years among women (age 25-49), (iii) the percentage of women with secondary or higher education, and (iv) the percentage of currently married or in union women employed in the 12 months preceding the survey. The  proposed DFOSR model provides a mechanism for understanding how each predictor impacts the \emph{shape} of the ASFR, with differential effects for different age groups.


Using the MCMC algorithm of Section \ref{MCMC}, we sample from the posterior distribution of the FOSR-AR model with 
$\alpha_{j,k,t}=  \alpha_{j,k}$, set $\sigma_{\epsilon_t} = \sigma_\epsilon$ with a Jeffreys' prior $\left[\sigma_\epsilon^2 \right]\propto 1/\sigma_\epsilon^2$. Application of the FOSR-AR model requires an exchangeability assumption: we assume the regression effects $\alpha_{j,k}$ are common across nations, and allow for the regression errors $\gamma_{k,t}$ to be autocorrelated in time $t$, even when different times $t$ correspond to different nations. The time-varying parameter DFOSR produced similar results (the simulations of Section \ref{simulations} suggest that, even when the true model is a DFOSR, the non-dynamic parameter model FOSR-AR may be preferable for small sample sizes $T \le 50$). We report results for $K=3$; larger values of $K$ produce nearly identical results. The MCMC is efficient:  the computation time for 25000 iterations of the Gibbs sampling algorithm (with $T = 20$, $M = 31$, and $p = 6$), implemented in \texttt{R} (on a MacBook Pro, 2.7 GHz Intel Core i5), is less than 3 minutes. We discard the first 10000 simulations as a burn-in and retain every 3rd sample. Traceplots indicate good mixing and suggest convergence (see the Appendix).

In Figure \ref{fig:fitted_flc_fertility}, we plot the ASFRs with the model-imputed ASFR curves $\hat Y_t(\bm\tau) = \sum_{k=1}^K f_k(\bm\tau) \beta_{k,t}$ and the loading curves $f_k(\bm\tau)$ for $\bm \tau = 17,\ldots,47$ with 95\% simultaneous credible bands \citep{ruppert2003semiparametric}. The fitted ASFR curves $\hat Y_t$ demonstrate an overall decrease in the fertility rate from 2000 to 2016, but this effect is not uniform: the largest decrease occurs for ages 27-37, while the fertility for ages less than 20 actually increased. Importantly, the 95\% simultaneous credible bands for $\hat Y_t$ do not overlap, which confirms that these ASFR curves have indeed changed over time. The loading curves are smooth and describe the dominant modes of variability in the ASFRs. Much of the variability in the $\{f_k\}$ occurs between the ages of 20-40, which further supports the use of \emph{age-specific}, rather than total, fertility rates.

\begin{figure}[h]
\begin{center}
\includegraphics[width=.49\textwidth]{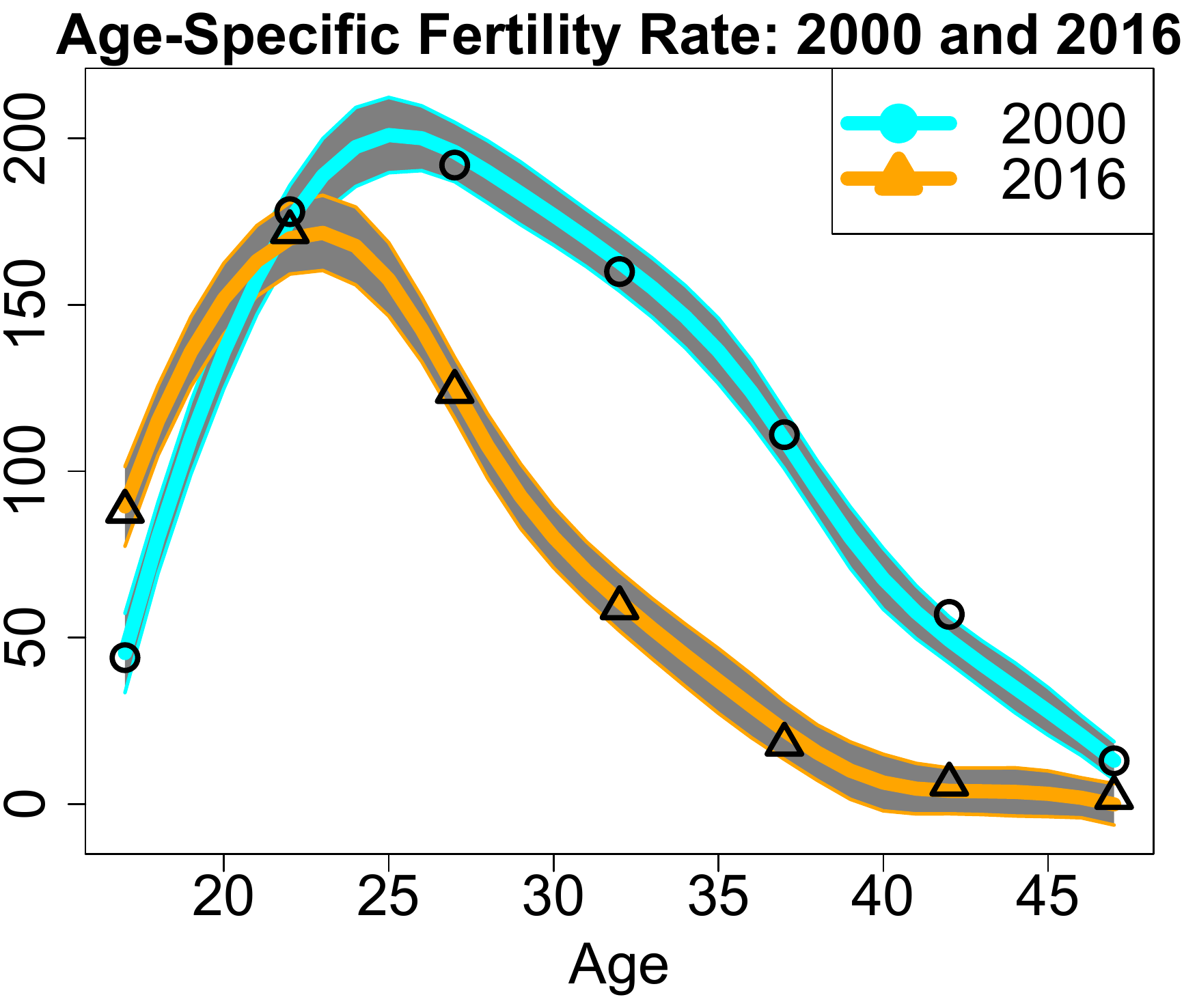}
\includegraphics[width=.49\textwidth]{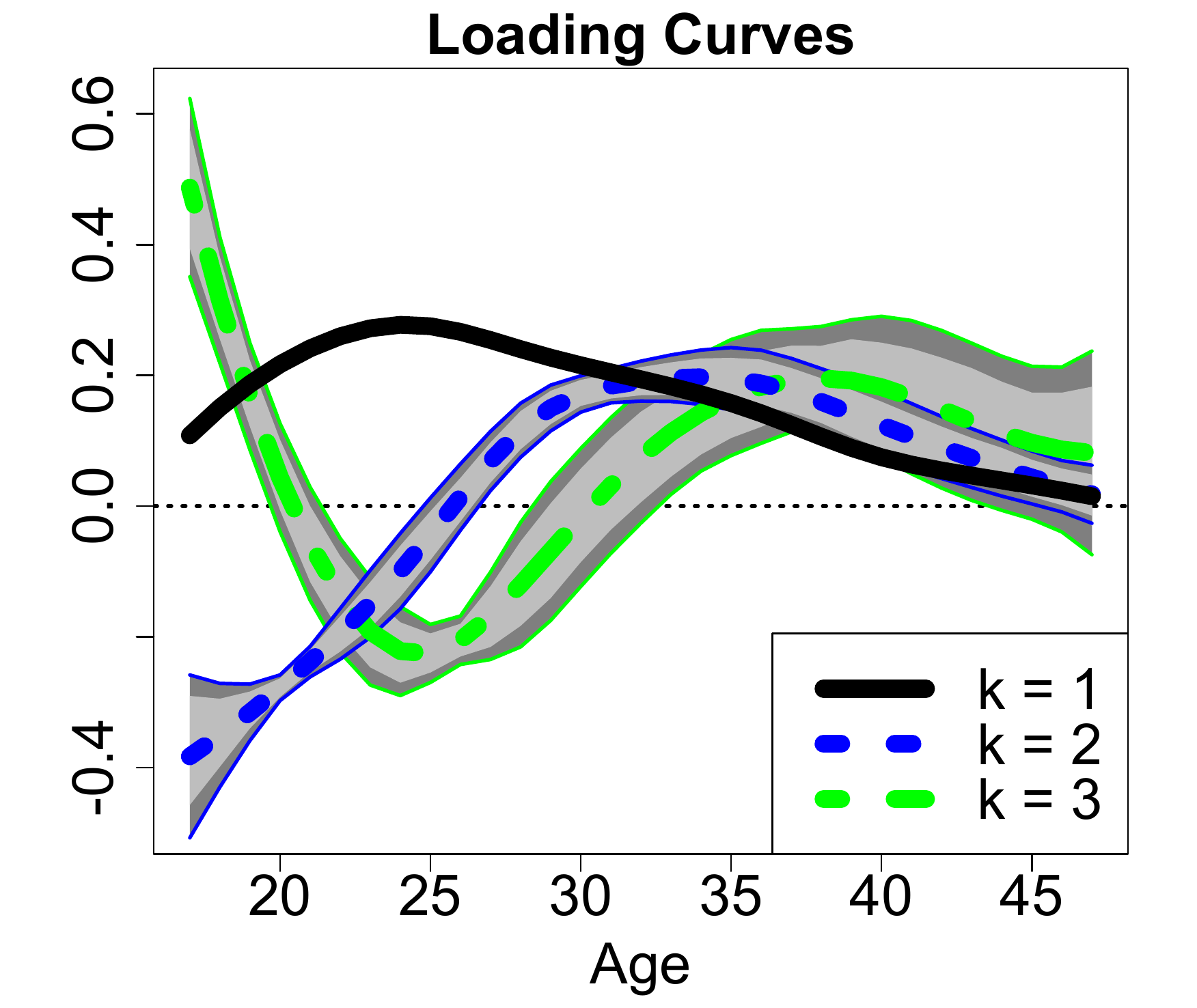}
\caption{{\bf (Left)} Age-specific fertility rates for South and Southeast Asia in 2000 and 2016. For each year $t$, the solid lines are the posterior means of $\hat Y_t(\bm\tau) = \sum_{k=1}^K f_k(\bm\tau) \beta_{k,t}$ and the gray bands are 95\% simultaneous credible bands for  $\hat Y_t(\bm\tau)$, where $\bm \tau =17,\ldots,47$ years of age. {\bf (Right)} Estimated loading curves $f_k$. For each curve $f_k(\bm\tau)$, the solid line is the posterior mean, the light gray bands are 95\% pointwise credible intervals, and the dark gray bands are 95\% simultaneous credible bands.
 \label{fig:fitted_flc_fertility}}
\end{center}
\end{figure}

In Figure \ref{fig:coef_fertility}, we plot the (static) regression functions $\tilde\alpha_j(\bm\tau) = \sum_{k=1}^K f_k(\bm\tau) \alpha_{j,k}$ for each predictor $j=1,\ldots,p$, which may be interpreted via model \eqref{dfosr_fts}. The 95\% simultaneous credible bands exclude zero for both (i) the percentage of currently married or in union women currently using any method of contraception and  (ii) the median age of first marriage or union in years among women (age 25-49), which indicates that these variables are important for ASFRs.  The U-shaped coefficient function in Figure  \ref{fig:coef_fertility} suggests that a greater percentage of married women with access to contraceptives corresponds to a decline in the expected fertility rate, specifically among women aged 22-45. The S-shaped coefficient function in Figure  \ref{fig:coef_fertility} suggests that a larger median age of first marriage corresponds to a \emph{decrease} in the expected fertility rate among women aged 17-23 and an \emph{increase} in the expected fertility rate among women aged 30-40. Importantly, these results are \emph{age-specific}: the association between each predictor and the fertility rate varies by age, while the smoothness of loading curves $f_k$ implies that similar ages should have similar associations.

\begin{figure}[h]
\begin{center}
\includegraphics[width=.24\textwidth]{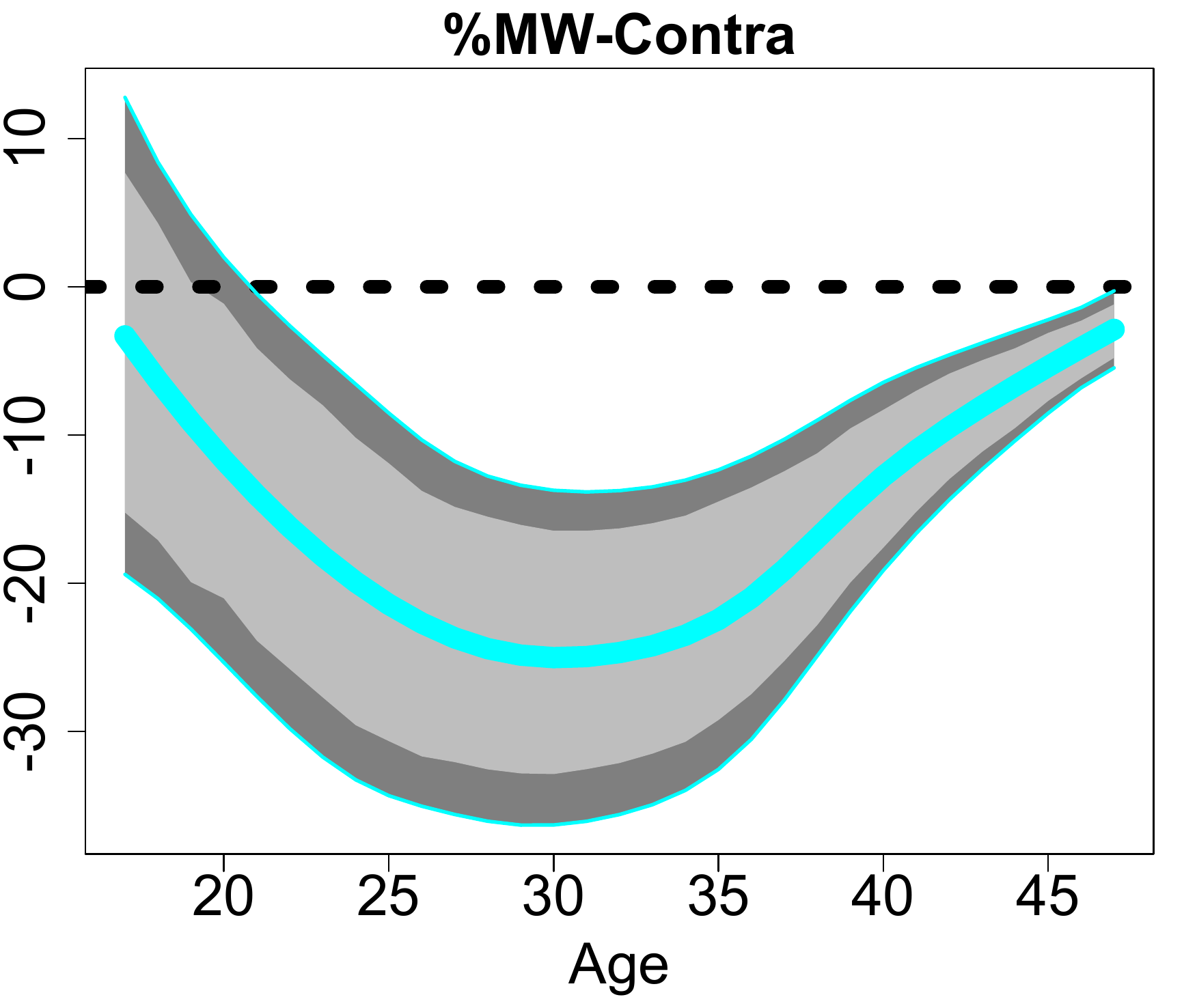}
\includegraphics[width=.24\textwidth]{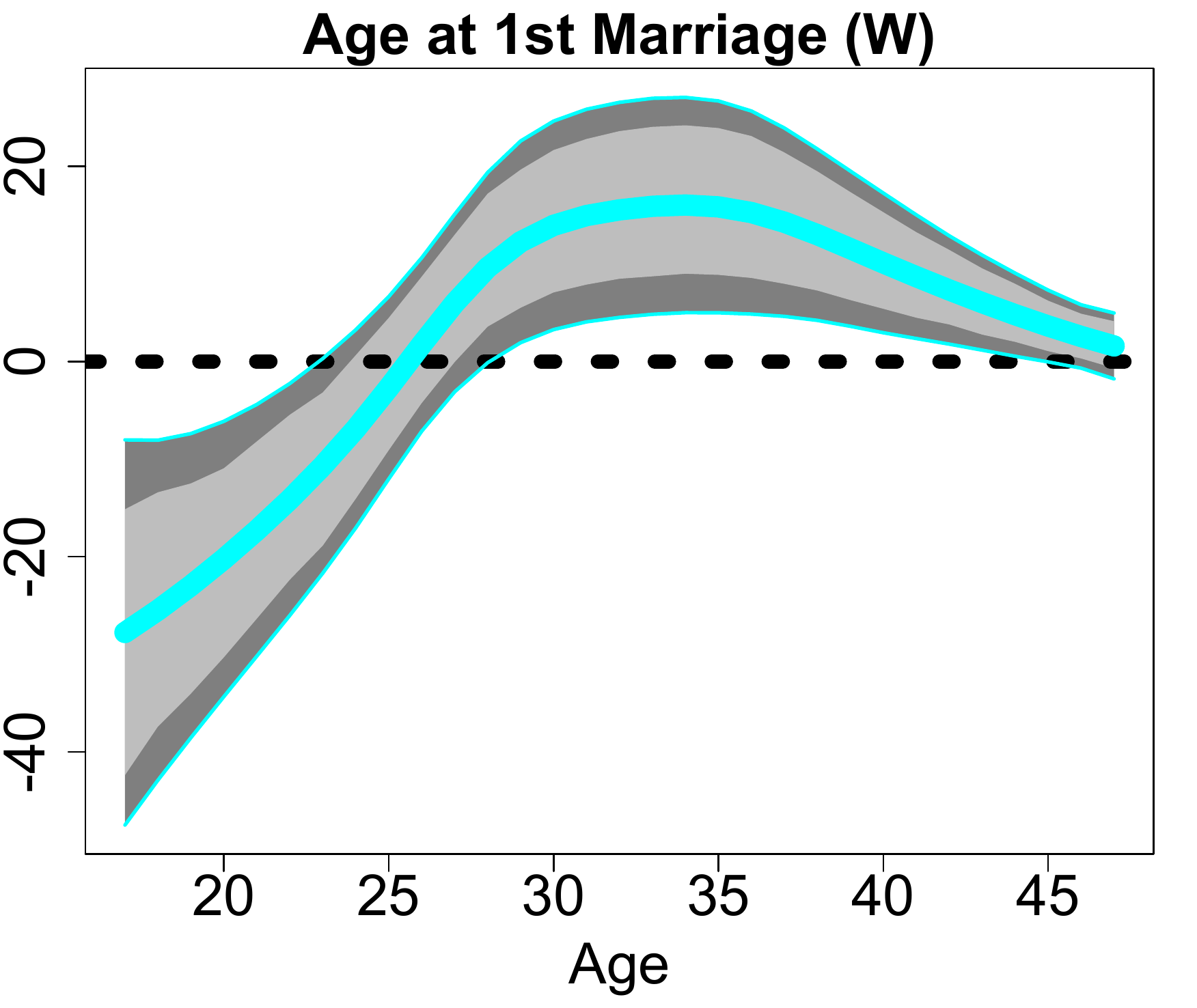}
\includegraphics[width=.24\textwidth]{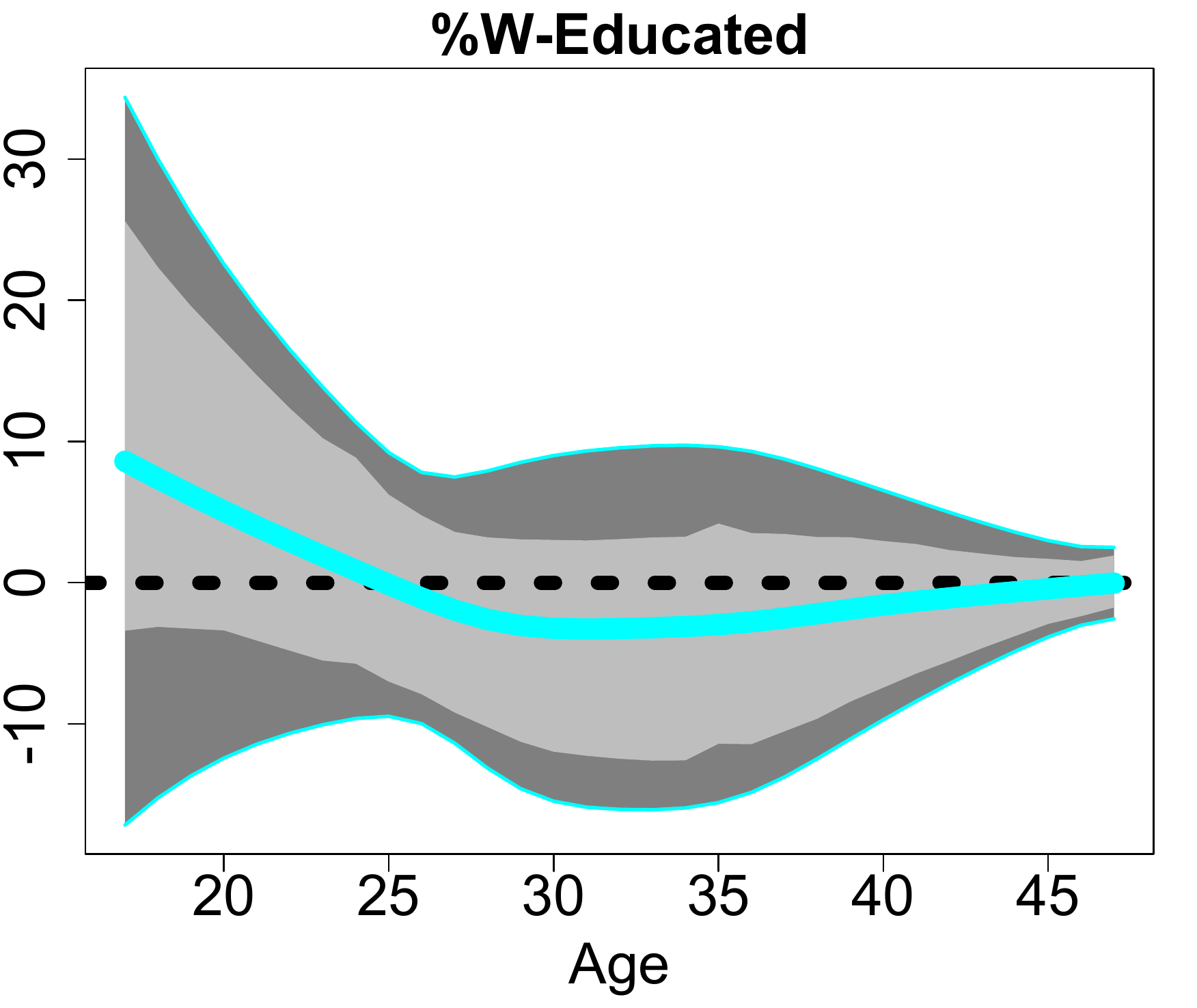}
\includegraphics[width=.24\textwidth]{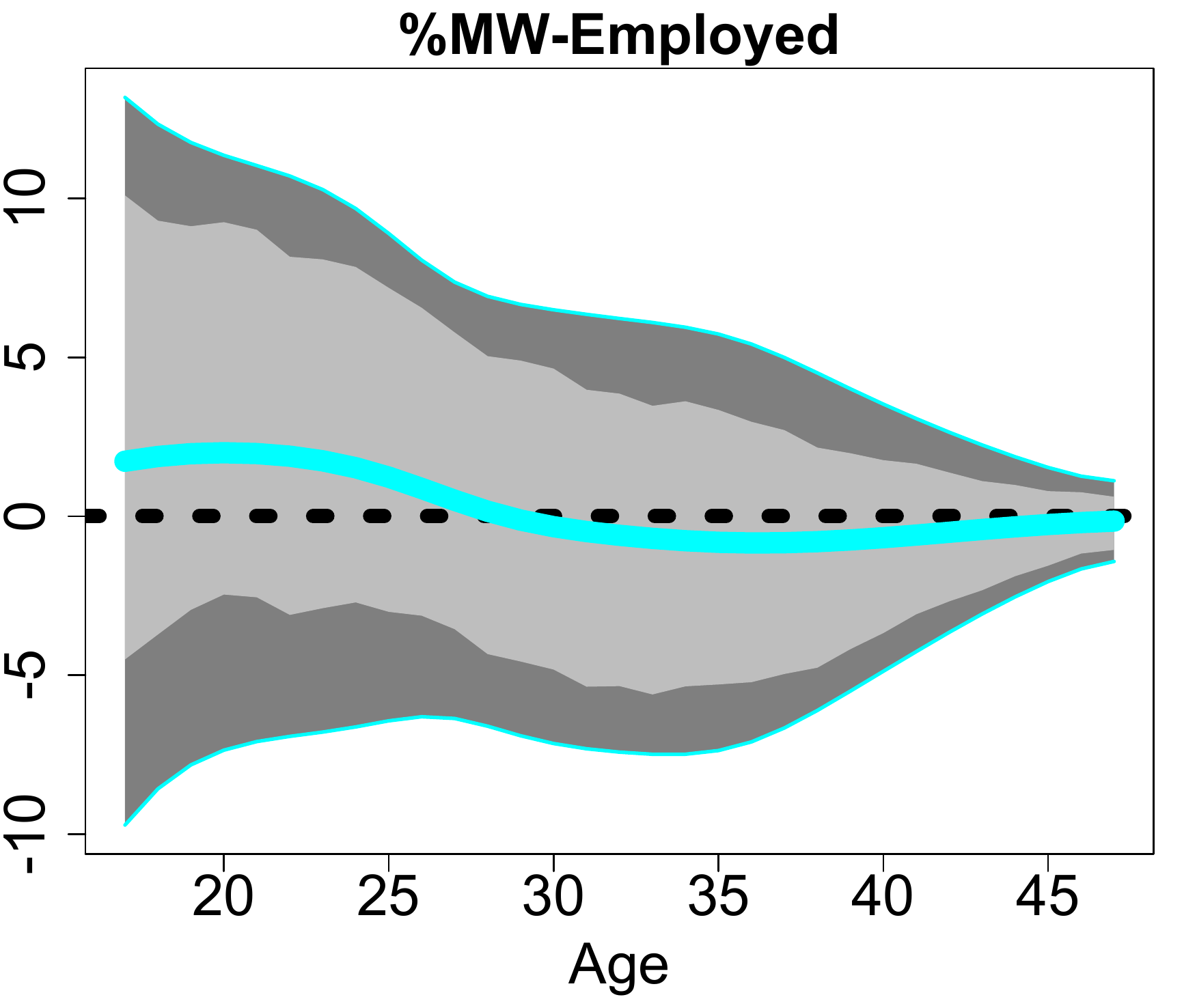}
\caption{Estimated regression function for the percentage of married women using contraceptives {\bf (left)}, the median age of first marriage among women {\bf (left center)}, the percentage of women with secondary or higher education {\bf (right center)}, and the percentage of married women employed in the 12 months preceding the survey {\bf (right)}. For each (static) regression function $\tilde\alpha_j(\bm\tau) = \sum_{k=1}^K f_k(\bm\tau) \alpha_{j,k}$, the solid line is the posterior mean, the light gray bands are 95\% pointwise credible intervals, and the dark gray bands are 95\% simultaneous credible bands.
 \label{fig:coef_fertility}}
\end{center}
\end{figure}

\section{MCMC Sampling Algorithm} \label{MCMC}
We develop an efficient Gibbs sampling algorithm for model \eqref{fts}-\eqref{evol} based on four essential components: (i) the loading curve sampler for $\{f_k\}$ with the identifiability constraint $\bm F' \bm F = \bm I_K$; (ii) the projection-based simplification of the likelihood \eqref{ftsVec} from Lemma \ref{ftsLike2}; (iii) a state space simulation smoother for the dynamic regression parameters in \eqref{reg} and \eqref{evol}; and (iv) parameter expansions  for the variance components in \eqref{fts}, \eqref{reg}, and \eqref{evol}. For sparsely observed functional data, in which the functional data $Y_t$ are \emph{not} observed at the same observation points $\bm \tau_1,\ldots,\bm \tau_M$ for all times $t$, we include a sampling-based imputation step as in Section \ref{fertility}. Since components (i) and (ii) are discussed in Section \ref{loadings} and component (iv) uses standard techniques for Bayesian shrinkage, we focus on (iii) here. The details of the full Gibbs sampling algorithm are provided in the Appendix.

Using Lemma \ref{ftsLike2}, we project the functional data $\bm Y_t$ on the loading curves $f_k$ to obtain the \emph{working likelihood} \eqref{ftsVec2}. Combining the dynamic terms from \eqref{reg}-\eqref{evol} into \emph{state variables} with likelihood \eqref{ftsVec2}, we have
\begin{align}
\label{fts2}
 \tilde Y_{k,t} &= \mu_k + \begin{pmatrix}  \bm x_t' & 1\end{pmatrix} \begin{pmatrix} \bm \alpha_{k,t} \\ \gamma_{k,t}  \end{pmatrix}  + \tilde \epsilon_{k,t}\\
\label{regEvol2}
\begin{pmatrix} \bm \alpha_{k,t} \\ \gamma_{k,t}  \end{pmatrix}  &= \begin{pmatrix} \bm I_p & 0 \\ 0 & \phi_k \end{pmatrix} \begin{pmatrix} \bm \alpha_{k,t-1} \\ \gamma_{k,t-1}  \end{pmatrix} + \begin{pmatrix} \bm \omega_{k,t} \\ \eta_{k,t}  \end{pmatrix}  
\end{align}
where $\bm \alpha_{k,t} = (\alpha_{1,k,t},\ldots,\alpha_{p,k,t})'$ and the errors $\tilde \epsilon_{k,t}$ and $(\bm \omega_{k,t}', \eta_{k,t})'$ are mutually independent and conditionally Gaussian. 
The resulting model is a \emph{dynamic linear model} \citep{westDLM} in the state variables $(\bm \alpha_{k,t}', \gamma_{k,t})'$, and therefore the parameters $\{\bm \alpha_{k,t}, \gamma_{k,t}\}_{t=1}^T$ may be sampled \emph{jointly} across all $t=1,\ldots,T$ using efficient state space simulation methods  \citep{durbin2002simple}. These samplers are also valid for FOSR-AR with 
$\alpha_{j,k,t}=  \alpha_{j,k}$. Note that the model \eqref{fts2}-\eqref{regEvol2} may be aggregated across $k=1,\ldots,K$ to produce a jointly sampler with respect to $k$; in our experience, however, doing so increases computation time without improving MCMC efficiency. A single draw of all dynamic regression coefficients and autoregressive regression error terms   $\{\bm \alpha_{k,t}, \gamma_{k,t}\}_{k,t}$ \emph{jointly} has computational complexity $\mathcal{O}(KTp^3)$. For small to moderate number of predictors $p < 30$, the algorithm is  efficient; for sufficiently small $K$, the sampler is nearly computationally equivalent to the analogous \emph{non-functional} time-varying parameter regression model. 

In addition to the loading curve sampler for $\{f_k\}$ in Section \ref{loadings} and the state space simulation sampler for $\{\bm \alpha_{k,t}, \gamma_{k,t}\}_{k,t}$ via \eqref{fts2}-\eqref{regEvol2}, the Gibbs sampler proceeds by iteratively sampling the intercepts $\{\mu_k\}$, the autoregressive coefficients $\{\phi_k\}$, and the variance components $\sigma_{\epsilon_t}^2$, $\sigma_{\eta_{k,t}}^2$, and $\sigma_{\omega_{j,k,t}}^2$---as well as any relevant hyperparameters---from their full conditional distributions (see the Appendix). Posterior inference is therefore available for these quantities as well as the time-varying parameter regression functions $\tilde\alpha_j(\bm\tau) = \sum_{k=1}^K f_k(\bm\tau) \alpha_{j,k}$ from Proposition \ref{model_equiv} and the fitted curves $\hat Y_t(\bm\tau) = \sum_{k=1}^K f_k(\bm\tau) \beta_{k,t}$ with $\beta_{k,t}$ defined in \eqref{reg}. 

\section{Discussion and Future Work}\label{conclusions}
The proposed \emph{dynamic function-on-scalars regression} model provides a fully Bayesian framework for simultaneously modeling functional dependence,  time dependence, and dynamic predictors. We incorporate a nonparametric model for functional dependence, an autoregressive model for time-dependence, and a time-varying parameter regression model for dynamic predictors. The model is flexible, yet incorporates appropriate shrinkage and smoothness priors to guard against overfitting. A simulation study validates our model for the loading curves $f_k$ (Section \ref{loadings}) and our choice of shrinkage priors (Section \ref{shrinkage}) by demonstrating substantial improvements in estimation accuracy relative to existing methods as well as simpler submodels. Applications in yield curves and age-specific fertility rates illustrate the utility of our approach: in particular, we provide estimation, uncertainty quantification, and imputation for regression coefficient \emph{functions}, which may be \emph{time-varying}. 

Future work will extend model \eqref{fts} for other important dependence structures, such as dynamic \emph{functional} predictors $X_{j,t}(\bm u)$ for $\bm u \in \mathcal{U}$, possibly with different domains $\mathcal{U} \ne \mathcal{T}$. Notably, our  efficient projection-based Gibbs sampler only requires the likelihood \eqref{fts} and the identifiability constraint $\bm F' \bm F = \bm I_K$ to obtain the working likelihood \eqref{ftsVec2}. Therefore, it is straightforward to combine our nonparametric model for the loading curves $f_k$ with alternative models for $\beta_{k,t}$ in \eqref{reg}-\eqref{evol}, while maintaining computational scalability. 

\bibliographystyle{apalike}
\bibliography{BFDLMbib}

\appendix
\section{Appendix}

\subsection*{MCMC Algorithm} \label{extra:full}
The \emph{dynamic function-on-scalars regression model} (DFOSR), with all prior distributions, is 
\begin{align}
\label{extra:fts}
Y_t(\bm \tau) &= \sum_{k=1}^K f_k(\bm \tau) \beta_{k,t} + \epsilon_t(\bm \tau), \quad \epsilon_t(\bm\tau) \stackrel{indep}{\sim}N(0, \sigma_{\epsilon}^2), \quad  \sigma_\epsilon^2 \propto 1/\sigma_\epsilon^2  \\
\label{extra:flcs} 
f_k(\bm\tau) &= \bm b'(\bm\tau) \bm \psi_k, \quad \bm \psi_k \stackrel{indep}{\sim} N\left(\bm 0, \lambda_{f_k}^{-1} \bm \Omega^{-1}\right), \quad \lambda_{f_k}^{-1/2} \stackrel{iid}{\sim} \mbox{Uniform}(0, 10^4)\\
\label{extra:reg}
\beta_{k,t} &= \mu_{k} +  \sum_{j=1}^p x_{j,t} \alpha_{j,k,t} + \gamma_{k,t}, \quad \gamma_{k,t} = \phi_k \gamma_{k,t-1} + \eta_{k,t}, \quad \eta_{k,t} \stackrel{indep}{\sim}N(0, \sigma_{\eta_{k,t}}^2) \\
\label{extra:mu}
\mu_k &\stackrel{indep}{\sim} N(0, \sigma_{\mu_k}^2), \quad \left[(\phi_k + 1)/2\right]\stackrel{iid}{\sim} \mbox{Beta}(5, 2)\\
\label{extra:muMGP}
\sigma_{\mu_k}^{-2} &= \prod_{\ell \le k} \delta_{\mu_\ell}, \quad \delta_{\mu_1} \sim \mbox{Gamma}(a_{\mu_1}, 1), \quad \delta_{\mu_\ell} \sim \mbox{Gamma}(a_{\mu_2}, 1), \quad \ell > 1 \\
 \label{extra:eta}
\sigma_{\eta_{k,t}}^2 &= \sigma_{\eta_k}^2/\xi_{\eta_{kt}}, \quad \xi_{\eta_{kt}} \stackrel{iid}{\sim} \mbox{Gamma}(\nu_\eta/2, \nu_\eta/2), \quad \nu_\eta \sim \mbox{Unif}(2, 128)
\\
 \label{extra:etaMGP}
 \sigma_{\eta_k}^{-2} &= \prod_{\ell \le k} \delta_{\eta_\ell}, \quad \delta_{\eta_1} \sim \mbox{Gamma}(a_{\eta_1}, 1), \quad \delta_{\eta_\ell} \sim \mbox{Gamma}(a_{\eta_2}, 1), \quad \ell > 1 \\
  \label{extra:hyper}
&  a_{\mu_1}, a_{\mu_2}, a_{\eta_1},a_{\eta_2} \stackrel{iid}{\sim}\mbox{Gamma}(2,1)\\
\label{extra:evol}
\alpha_{j,k,t} &= \alpha_{j,k,t-1}  + \omega_{j,k,t}, \quad \omega_{j,k,t} \stackrel{indep}{\sim}N(0, \sigma_{\omega_{j,k,t}}^2)  \\
\label{extra:hs}
\sigma_{\omega_{j,k,t}} &\stackrel{ind}{\sim} C^+(0, \lambda_{j,k}), \quad \lambda_{j,k} \stackrel{ind}{\sim} C^+(0, \lambda_{j}), \quad \lambda_{j} \stackrel{ind}{\sim} C^+(0, \lambda_{0}), \quad \lambda_{0} \stackrel{ind}{\sim} C^+(0, 1/\sqrt{T-1}) \\
\label{extra:inits}
\eta_{k,0} &\stackrel{iid}{\sim} t_3(0,1), \quad \omega_{j,k,0} \stackrel{iid}{\sim} t_3(0,1)
\end{align}
for  $\bm \tau \in \mathcal{T}$, $j=1,\ldots,p$, $k=1,\ldots,K$, and $t=1,\ldots,T$. The details for each level are described in the main paper. Note that $\bm \Omega$ in \eqref{extra:flcs} may not be invertible, but for low-rank thin plate splines the posterior distribution of $\bm \psi_k$ will be proper. In the yield curve application of Section \ref{yields}, the Jeffreys prior in \eqref{extra:fts} is replaced by a stochastic volatility model for the variance $\sigma_{\epsilon_t}^2$. Specifically, the model is an AR(1) for the log-variance $h_t = \log \sigma_{\epsilon_t}^2$: $h_{t+1} = \mu_h + \phi_h(h_t - \mu_h) + \nu_{h_t}$, where $\mu_h \sim N(-10, 100)$ is the unconditional mean of log-volatility, $\left[(\phi_h + 1)/2\right] \sim \mbox{Beta}(20, 1.5)$ is the autoregressive parameter, and $\nu_{h_t} \stackrel{iid}{\sim}N(0, \sigma_{\nu_h}^2)$ is the log-volatility innovation with standard deviation $\sigma_{\nu_h} \sim  \mbox{Uniform}(0, 100)$. Sampling $\{h_t\}$ is a straightforward modification of the algorithm in \cite{kastner2014ancillarity}, and conditional on $\{h_t\}$, the parameters $\mu_h, \phi_h$, and $\sigma_{\nu_h}$ may be sampled iteratively using standard procedures for Bayesian autoregressive models.

We construct a Gibbs sampling algorithm that primarily features draws from known full conditional distributions with a small number of slice sampling steps \citep{neal2003slice}. For the half-Cauchy and t-distributions in \eqref{extra:hs} and \eqref{extra:inits}, respectively, we use the following scale mixture of Gaussian parameter expansions. The hierarchy of half-Cauchy distributions may be written on the precision scale with Gamma expansions:
$[\sigma_{\omega_{j,k,t}}^{-2} | \xi_{\sigma_{\omega_{j,k,t}}}] \sim \mbox{Gamma}(1/2, \xi_{\sigma_{\omega_{j,k,t}}})$, $[\xi_{\sigma_{\omega_{j,k,t}}} | \lambda_{j,k}] \sim \mbox{Gamma}(1/2, \lambda_{j,k}^{-2})$, $[\lambda_{j,k}^{-2} | \xi_{\lambda_{j,k}}] \sim \mbox{Gamma}(1/2, \xi_{\lambda_{j,k}})$, $[\xi_{\lambda_{j,k}} | \lambda_{j}] \sim \mbox{Gamma}(1/2, \lambda_{j}^{-2})$, $[\lambda_{j}^{-2} | \xi_{\lambda_{j}}] \sim \mbox{Gamma}(1/2, \xi_{\lambda_{j}})$, $[\xi_{\lambda_{j}} | \lambda_{0}] \sim \mbox{Gamma}(1/2, \lambda_{0}^{-2})$, $[\lambda_{0}^{-2} | \xi_{\lambda_{0}}] \sim \mbox{Gamma}(1/2, \xi_{\lambda_{0}})$, and $[\xi_{\lambda_{0}}] \sim \mbox{Gamma}(1/2, T-1)$. The t-distributions are expanded as 
$[\eta_{k,0}| \xi_{\eta_{k,0}}] \sim N(0, 1/\xi_{\eta_{k,0}})$ and $\xi_{\eta_{k,0}} \sim \mbox{Gamma}(3/2, 3/2)$ and similarly, 
$[\omega_{j,k,0}| \xi_{\omega_{j,k,0}}] \sim N(0, 1/\xi_{\omega_{j,k,0}})$ and $\xi_{\omega_{j,k,0}} \sim \mbox{Gamma}(3/2, 3/2)$. In all cases, the full conditional distributions are Gamma (on the precision scale). 

\subsection*{Gibbs Sampling Algorithm}
\begin{enumerate}
\item {\bf Imputation:} for all unobserved $Y_t(\bm \tau_t^*)$, sample each $\left[Y_t(\bm \tau_t^*) | \{f_k\}, \{\beta_{k,t}\}, \{\sigma_{\epsilon_t}\}\right] \stackrel{indep}{\sim} N\big(\sum_k f_k(\bm \tau_t^*) \beta_{k,t}, \sigma_{\epsilon_t}^2\big)$.
\item {\bf Loading curves and smoothing parameters:} for $k=1,\ldots,K$, 
\begin{enumerate}
\item Sample $[\lambda_{f_k} | \cdots] \sim \mbox{Gamma}((L_M - D + 1 + 1)/2, \bm\psi_k' \bm\Omega \bm\psi_k/2)$ truncated to $(10^{-8}, \infty)$.
\item Sample $\left[\bm \psi_k | \cdots \right] \sim N\left(\bm Q_{\psi_k}^{-1} \bm \ell_{\psi_k}, \bm Q_{\psi_k}^{-1}\right)$ conditional on $\bm C_k \bm \psi_k = \bm 0$, where $\bm C_k = (\bm f_1, \ldots, \bm f_{k-1}, \bm f_{k+1}, \ldots, \bm f_K)' \bm B= (\bm \psi_1, \ldots, \bm \psi_{k-1}, \bm \psi_{k+1}, \ldots, \bm \psi_K)'$, using a 
modified version of the efficient Cholesky decomposition approach of \cite{wand2008semiparametric}:
\begin{enumerate}
\item Compute the (lower triangular) Cholesky decomposition $\bm Q_{\psi_k}   = \bar{\bm{Q}}_L\bar{\bm{Q}}_L'$;
\item Use forward substitution to obtain $\bar{\bm{\ell}}$ as the solution to $\bar{\bm{Q}}_L \bar{\bm{\ell}} =\bm \ell_{\psi_k}$,  then use backward substitution to obtain $\bm \psi_k^0$ as the solution to $\bar{\bm{Q}}_L' \bm \psi_k^0 = \bar{\bm{\ell}} + {\bm{z}}$, where ${\bm{z}}\sim N (\mathbf{0},  \mathbf{I}_{L_M})$;
\item Use forward substitution to obtain $\utwi{\bar C}$ as the solution to $\bar{\bm{Q}}_L \utwi{\bar C}= \bm C_k $,  then use backward substitution to obtain $\utwi{\tilde C}$ as  the solution to $\bar{\bm{Q}}_L'\utwi{\tilde C} = \utwi{\bar C}$;
\item Set $\bm \psi_k^* = \bm \psi_k^0 - \utwi{\tilde C} (\bm C_k\utwi{\tilde C})^{-1}\bm C_k \bm \psi_k^0$;
\item Retain the vectors $\bm \psi_k = \bm \psi_k^*/\sqrt{{\bm \psi_k^*}'\bm B' \bm B\bm \psi_k^*} = \bm \psi_k^*/ || \bm \psi_k^*||$  and $\bm f_k = \bm B \bm \psi_k$ and update $\beta_{k,t} \leftarrow \beta_{k,t} || \bm \psi_k^*||$.
\end{enumerate} 
\end{enumerate}
\item {\bf Project:} update $ \tilde Y_{k,t} = \bm f_k' \bm Y_t = \bm \psi_k' \left(\bm B' \bm Y_t\right)$ for all $k,t$.
\item {\bf Dynamic state variables:} sample $[\{\alpha_{j,k,t}\}, \{\gamma_{k,t}\} | \{\tilde Y_{k,t}\}, \cdots]$ jointly, including the initial states $\{\eta_{k,0}\}$ and $\{\omega_{j,k,0}\}$, using \cite{durbin2002simple}. \\
\emph{Note:} we condition on $\{\mu_k\}$ for computational efficiency (i.e., a smaller state vector), but $\mu_k$ could be included in this joint sampler.
\item {\bf Unconditional mean and AR coefficients:} for $k=1,\ldots,K$,
\begin{enumerate}
\item Using the \emph{centered} AR parametrization with $\gamma_{k,t}^c = \gamma_{k,t} + \mu_k$ (computed with the previous simulated value of $\mu_k$), so $\gamma_{k,t}^c = \mu_k  + \phi_k(\gamma_{k,t-1}^c - \mu_k) + \eta_{k,t}$, sample 
$[\mu_k | \cdots] \stackrel{indep}{\sim} N(Q_{\mu_k}^{-1}, \ell_{\mu_k}, Q_{\mu_k}^{-1})$ where $Q_{\mu_k} = \sigma_{\mu_k}^{-2} + (1-\phi_k)^2 \sum_{t=2}^T \sigma_{\eta_{k,t}}^{-2}$ and $\ell_{\mu_k} = (1-\phi_k) \sum_{t=2}^T (\gamma_{k,t}^c - \phi_k \gamma_{k,t-1}^c)\sigma_{\eta_{k,t}}^{-2}$.
\item Sample $\phi_k$ using the slice sampler \citep{neal2003slice}.
\end{enumerate}
\item {\bf Variance parameters:}
\begin{enumerate}
\item {\bf Observation error variance:} $[\sigma_\epsilon^{-2} | \cdots] \sim \mbox{Gamma}\left(\frac{MT}{2}, \frac{1}{2}\sum_{t=1}^T ||\bm Y_t - \bm F \bm\beta_t ||^2\right)$

\item {\bf Multiplicative Gamma Process Parameters:} given $\mu_k$ and $\eta_{k,t} = \gamma_{k,t} - \phi_k \gamma_{k,t-1}$ for $\gamma_{k,t} = \gamma_{k,t}^c - \mu_k$ (after sampling $\mu_k$ above),
\begin{enumerate}
\item Sample $[\delta_{\mu_1} | \cdots ] \sim \mbox{Gamma}\big(a_{\mu_1} + \frac{K}{2}, 1 + \frac{1}{2} \sum_{k=1}^K \tau_{\mu_k}^{(1)} \mu_k^2 \big)$ and $[\delta_{\mu_\ell} | \cdots ] \sim \mbox{Gamma}\big(a_{\mu_2} + \frac{K - \ell + 1}{2}, 1 + \frac{1}{2} \sum_{k=\ell}^K \tau_{\mu_k}^{(\ell)} \mu_k^2 \big)$ for $\ell > 1$ where $\tau_{\mu_\ell}^{(k)} = \prod_{h = 1, h \ne k}^\ell \delta_{\mu_h}$.
\item Set $\sigma_{\mu_k} = \prod_{\ell \le k} \delta_{\mu_\ell}^{-1/2}$.
\item Sample $[\delta_{\eta_1} | \cdots ] \sim \mbox{Gamma}\big(a_{\eta_1} + \frac{K(T-1)}{2}, 1 + \frac{1}{2} \sum_{k=1}^K \tau_{\eta_k}^{(1)} \sum_{t=2}^T \eta_{k,t}^2 \xi_{\eta_{k,t}} \big)$ and $[\delta_{\eta_\ell} | \cdots ] \sim \mbox{Gamma}\big(a_{\eta_2} + \frac{(K - \ell + 1)(T-1)}{2}, 1 + \frac{1}{2} \sum_{k=\ell}^K \tau_{\eta_k}^{(\ell)} \sum_{t=2}^T \eta_{k,t}^2 \xi_{\eta_{k,t}} \big)$ for $\ell > 1$ where $\tau_{\eta_\ell}^{(k)} = \prod_{h = 1, h \ne k}^\ell \delta_{\eta_h}$.
\item Set $\sigma_{\eta_k} = \prod_{\ell \le k} \delta_{\eta_\ell}^{-1/2}$ 
\item Sample $[\xi_{\eta_{k,t}} | \cdots] \stackrel{indep}{\sim} \mbox{Gamma}\big(
\frac{\nu_\eta}{2} + \frac{1}{2}, \frac{\nu_\eta}{2} + \frac{\eta_{k,t}^2}{2\sigma_{\eta_k}^2}
\big)$
\item Set $\sigma_{\eta_{k,t}} =  \sigma_{\eta_k}/\sqrt{\xi_{\eta_{k,t}}}$.
\end{enumerate}

\item {\bf Hierarchical Half-Cauchy Parameters:} for $\omega_{j,k,t} = \alpha_{j,k,t} - \alpha_{j,k,t-1}$, 
\begin{enumerate}
\item Sample $[\sigma_{\omega_{j,k,t}}^{-2} | \cdots ] \stackrel{indep}{\sim}  \mbox{Gamma}\big(1, \xi_{\sigma_{\omega_{j,k,t}}} + \omega_{j,k,t}^2/2\big)$ and  
\\ $[\xi_{\sigma_{\omega_{j,k,t}}} | \cdots] \stackrel{indep}{\sim}  \mbox{Gamma}\big(1, \lambda_{j,k}^{-2} + \sigma_{\omega_{j,k,t}}^{-2}\big)$.

\item Sample $[\lambda_{j,k}^{-2} | \cdots] \stackrel{indep}{\sim}  \mbox{Gamma}\big(\frac{T}{2}, \xi_{\lambda_{j,k}} + \sum_t \xi_{\sigma_{\omega_{j,k,t}}}\big)$ and 
\\ $[\xi_{\lambda_{j,k}} | \cdots] \stackrel{indep}{\sim}  \mbox{Gamma}\big(1, \lambda_{j}^{-2} + \lambda_{j,k}^{-2}\big)$.

\item Sample $[\lambda_{j}^{-2} | \cdots] \stackrel{indep}{\sim}  \mbox{Gamma}\big(\frac{K + 1}{2}, \xi_{\lambda_{j}} + \sum_{k=1}^K \xi_{\lambda_{j,k}} \big)$ and 
\\ $[\xi_{\lambda_{j}} | \cdots] \stackrel{indep}{\sim}  \mbox{Gamma}\big(1, \lambda_{0}^{-2} + \lambda_{j}^{-2}\big)$. 

\item Sample $[\lambda_{0}^{-2} | \cdots] \sim \mbox{Gamma}\big(\frac{p + 1}{2}, \xi_{\lambda_{0}} + \sum_{j=1}^p \xi_{\lambda_{j}} \big)$ 
and 
\\ $[\xi_{\lambda_{0}} | \cdots] \sim \mbox{Gamma}\big(1, (T-1) + \lambda_{0}^{-2} \big)$. 
\end{enumerate}

\item {\bf Parameter-expanded initial values:}
\begin{enumerate}
\item Sample $[\xi_{\eta_{k,0}} | \ldots] \stackrel{indep}{\sim}  \mbox{Gamma}\big(\frac{3}{2} + \frac{1}{2}, \frac{3}{2} + \frac{1}{2} \eta_{k,0}^2\big)$.
\item Sample $[\xi_{\omega_{j,k,0}} | \ldots] \stackrel{indep}{\sim}  \mbox{Gamma}\big(\frac{3}{2} + \frac{1}{2}, \frac{3}{2} + \frac{1}{2} \omega_{j,k,0}^2\big)$.
\end{enumerate}
\end{enumerate}
\item {\bf Hyperparameters:} sample $a_{\mu_1}, a_{\mu_2}, a_{\eta_1},a_{\eta_2}$, and $\nu_\eta$ independently using the slice sampler \citep{neal2003slice}.
\end{enumerate}


\subsection*{Low-Rank Thin Plate Splines}
Thin plate splines are designed  for modeling an unknown smooth function of multiple inputs  $\bm\tau \in\mathcal{T}\subset \mathbb{R}^D$ with $D \in \mathbb{Z}^+$. Thin plate splines place a (known) basis function at every observation point, so $L_M = M$; \emph{low-rank} thin plate splines (LR-TPS) select a smaller set of basis functions  $L_M < M$. LR-TPS can achieve similar estimation accuracy as thin plate splines in a fraction of the computing time, and demonstrate exceptional MCMC efficiency \citep{crainiceanu2005bayesian}. Each LR-TPS $f_k$ has only one hyperparameter $\lambda_{f_k} > 0$, which is a prior precision corresponding to the smoothness parameter \citep{wahba1990spline}.

Given observation points $\bm \tau_j$ for $j=1,\ldots,M$, we construct the basis and penalty matrices in three steps: (i) we build the LR-TPS basis and penalty matrices using the definitions in \cite{wood2006generalized}; (ii) we diagonalize the penalty matrix for an equivalent representation, following \cite{ruppert2003semiparametric} and \cite{crainiceanu2005bayesian}; and (iii) we orthonormalize the basis matrix (and adjust the penalty matrix accordingly). The diagonalization and orthonormalization steps (ii) and (iii) may accompany any choice of basis and penalty matrices, but substantially improve MCMC performance for LR-TPS. Note that while the diagonalization step (ii) is not strictly necessary given the orthonormalization step (iii), it ensures that the final penalty matrix---and therefore the prior precision matrix---is positive definite, which is \emph{not} guaranteed for LR-TPS \citep{ruppert2003semiparametric}.

To build the basis and penalty matrices, we begin by selecting the number and location of knots. For a small number of observation points, $M \le 25$, we use the full rank thin plate spline basis with knots at the unique observation points  $\bm \kappa_\ell = \bm \tau_\ell$. When $M > 25$, we use $(L_M  - D - 1)= \min\{M/4, 150\} $ knots. In the case of $D=1$, knots are selected using the quantiles of the observation points, i.e., $\kappa_\ell$ is the $\left(\ell/L_M \right)$th sample quantile of the unique $\bm\tau_j$; for 
 $D > 1$, we select knot locations using a space-filling algorithm, as in \cite{ruppert2003semiparametric}. Let $\bm W_{0}$ be the $M \times (D+1)$ matrix with $j$th row $[\bm W_{0}]_j = (1, \bm \tau_j')$, $\bm Z_{0}$ be the $M \times (L_M - D - 1)$ matrix with $(j, \ell)$th entry $[\bm Z_0]_{j,\ell}  = b(||\bm \tau_j - \bm \kappa_\ell||)$, and 
 $\bm \Omega_{Z_0}$ be the $(L_M - D - 1) \times (L_M - D - 1)$ penalty matrix with $(\ell, \ell')$th entry $[\bm \Omega_{Z_0}]_{\ell,\ell'}  = b(||  \bm \kappa_\ell - \bm \kappa_{\ell'} ||)$, where $b(r) =r^{4 - D} \log(r)$ for $D$ even and  $b(r) =r^{4 - D}$ for $D$ odd,  $r > 0$, are the (nonlinear) cubic thin plate spline basis functions \citep{wood2006generalized}. The matrices $\bm W_0$ and $\bm Z_0$ constitute the LR-TPS basis matrix, while $\bm \Omega_{Z_0}$ is the LR-TPS penalty matrix. To diagonalize the penalty matrix, let $\bm B_0 = [\bm W_0 : \bm Z_0 \bm \Omega_{Z_0}^{-1/2}]$ be the LR-TPS basis matrix and $\bm \Omega_0 = \mbox{diag}\left(\bm 0_{D+1}', \bm 1_{L_M - D -1}'\right)$ be the diagonalized LR-TPS penalty matrix, where $\bm 0_{D+1}$ is a $(D+1)$-dimensional vector of zeros and $\bm 1_{L_M - D -1}$ is a $(L_M - D -1)$-dimensional vector of ones.  
 Lastly, let $\bm B_0 = \bm Q \bm R$ be the QR decomposition of the initial basis matrix $\bm B_0$, where $\bm Q$ is $L_M \times L_M$ with $\bm Q' \bm Q = \bm I_{L_M}$ and $\bm R$ is $L_M \times L_M$ and upper triangular. Using the orthonormal basis matrix $\bm B = \bm Q$, we reparameterize the penalty matrix $\bm \Omega = (\bm R')^{-1} \bm \Omega_0 \bm R^{-1}$ to obtain an equivalent representation. Notably, this basis matrix $\bm B$ and penalty matrix $\bm \Omega$ construction is a one-time cost. 
 
 For orthonormalized LR-TPS, the full conditional distribution simplifies to $[\bm \psi_k | \cdots ] \sim N\left(\bm Q_{\psi_k}^{-1} \bm \ell_{\psi_k}, \bm Q_{\psi_k}^{-1}\right)$, where $\bm Q_{\psi_k} = \bm I_{L_M} \sum_{t=1}^T \beta_{k,t}^2/\sigma_{\epsilon_t}^2 +  \lambda_{f_k} \bm \Omega$ and $\bm \ell_{\psi_k} =   \sum_{t=1}^T [\beta_{k,t}/\sigma_{\epsilon_t}^2 (\bm B' \bm Y_t)] -  \sum_{t=1}^T [ \beta_{k,t}/\sigma_{\epsilon_t}^2 \sum_{\ell \ne k} \bm \psi_\ell \beta_{\ell,t} ]$. Since $ \lambda_{f_k} > 0$ corresponds to a prior precision parameter, we follow \cite{gelman2006prior} and \cite{kowal2017bayesian} and impose a uniform prior distribution on the corresponding standard deviation,  $\lambda_{f_k}^{-1/2} \stackrel{iid}{\sim} \mbox{Uniform}(0, 10^4)$.


\subsection*{Additional Proofs}\label{extra:proofs}
Given functional data observations $\bm Y_t = (Y_t(\bm \tau_1),\ldots, \bm Y_t(\bm \tau_M))'$ at observation points $\{\bm \tau_j\}_{j=1}^M$, consider the generalization of the likelihood \eqref{ftsVec} from the main paper:
\begin{equation}\label{ftsVecGen}
\bm Y_t = \sum_{k=1}^K \bm f_k \beta_{k,t} + \bm \epsilon_t,
\quad \bm \epsilon_t \stackrel{indep}{\sim} N(\bm 0,\bm \Sigma_{\epsilon_t})
\end{equation}
where  $\bm \Sigma_{\epsilon_t}$ is a general $M\times M$ covariance matrix. 
\begin{lemma}\label{projectionLikeGen}
Under the identifiability constraint $\bm F' \bm F = \bm I_K$, the joint likelihood in \eqref{ftsVecGen} is
\begin{equation}\label{ftsLikeGen}
\begin{aligned}
p\left(\bm Y_1,\ldots, \bm Y_T | \{\bm f_k, \beta_{k,t}, \bm\Sigma_{\epsilon_t}\}_{k,t}\right)
 = c_Y \prod_{t=1}^T \left| \bm \Sigma_{\epsilon_t}\right|^{-1/2} &\exp
 \Big\{
 -\frac{1}{2 } \Big[ 
 \bm Y_t' \bm \Sigma_{\epsilon_t}^{-1} \bm Y_t + \\
 &  \quad \bm \beta_t' \left(\bm F' \bm \Sigma_{\epsilon_t}^{-1}\bm F\right) \bm \beta_t  - 2 \bm \beta_t' \left(\bm F'  \bm \Sigma_{\epsilon_t}^{-1}\bm Y_t\right)
 \Big]
 \Big\}
\end{aligned}
\end{equation}
where $c_Y = (2\pi)^{-MT/2}$ is a constant and $\bm \beta_t' = (\beta_{1,t},\ldots,\beta_{K,t})$. 
\end{lemma}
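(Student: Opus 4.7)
The plan is to prove Lemma \ref{projectionLikeGen} by a direct expansion of the Gaussian likelihood, since the identity is purely an algebraic rearrangement of the exponent in an $M$-variate normal density written in terms of the matrix $\bm F = (\bm f_1,\ldots,\bm f_K)$. No stochastic argument is needed: conditional on $\{\bm f_k,\beta_{k,t},\bm\Sigma_{\epsilon_t}\}$, \eqref{ftsVecGen} prescribes that $\bm Y_t \sim N(\bm F\bm\beta_t,\bm\Sigma_{\epsilon_t})$ independently across $t$, which fixes the functional form of $p(\bm Y_1,\ldots,\bm Y_T \mid \cdots)$ up to expansion of the quadratic form.

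First I would write down the single-time density
\[
p(\bm Y_t \mid \bm f_k,\beta_{k,t},\bm\Sigma_{\epsilon_t}) = (2\pi)^{-M/2}\,|\bm\Sigma_{\epsilon_t}|^{-1/2}\exp\!\left\{-\tfrac{1}{2}(\bm Y_t - \bm F\bm\beta_t)'\bm\Sigma_{\epsilon_t}^{-1}(\bm Y_t - \bm F\bm\beta_t)\right\},
\]
and then invoke independence of $\{\bm\epsilon_t\}$ across $t$ to form the joint density as the product $\prod_{t=1}^T p(\bm Y_t \mid \cdots)$, which immediately supplies the prefactor $c_Y = (2\pi)^{-MT/2}$ and the $\prod_t |\bm\Sigma_{\epsilon_t}|^{-1/2}$ term. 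Next I would expand the quadratic form in the exponent using $\bm F\bm\beta_t = \sum_k \bm f_k\beta_{k,t}$ and the symmetry of $\bm\Sigma_{\epsilon_t}^{-1}$:
\[
(\bm Y_t - \bm F\bm\beta_t)'\bm\Sigma_{\epsilon_t}^{-1}(\bm Y_t - \bm F\bm\beta_t) = \bm Y_t'\bm\Sigma_{\epsilon_t}^{-1}\bm Y_t + \bm\beta_t'(\bm F'\bm\Sigma_{\epsilon_t}^{-1}\bm F)\bm\beta_t - 2\bm\beta_t'(\bm F'\bm\Sigma_{\epsilon_t}^{-1}\bm Y_t),
\]
and substitute back to obtain the stated expression.

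The identifiability constraint $\bm F'\bm F = \bm I_K$ plays no algebraic role in this most general form and need not be used in the expansion; it is retained in the hypothesis only for consistency with the main paper's operating assumption. Its utility becomes evident in the corollary specialization $\bm\Sigma_{\epsilon_t} = \sigma_{\epsilon_t}^2\bm I_M$ of Lemma \ref{ftsLike2}: there the middle term collapses to $\sigma_{\epsilon_t}^{-2}\bm\beta_t'\bm\beta_t$, the cross term becomes $\sigma_{\epsilon_t}^{-2}\bm\beta_t'(\bm F'\bm Y_t)$, and the three quadratic pieces separate across $k$, exposing the $K$-dimensional projected data $\tilde Y_{k,t}=\bm f_k'\bm Y_t$. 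Since Lemma \ref{projectionLikeGen} contains no such simplification in its statement, there is essentially no obstacle: the ``hard part'' is merely being careful with the symmetry used in cross-multiplication, and noting that the product over $t$ distributes over each of the three exponent pieces because they each decompose time-wise.
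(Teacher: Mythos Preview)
Your proposal is correct and matches the paper's (implicit) approach: the paper states Lemma~\ref{projectionLikeGen} without proof, treating the expansion of the multivariate Gaussian quadratic form $(\bm Y_t - \bm F\bm\beta_t)'\bm\Sigma_{\epsilon_t}^{-1}(\bm Y_t - \bm F\bm\beta_t)$ as immediate. Your observation that the constraint $\bm F'\bm F = \bm I_K$ is algebraically inert in the general-$\bm\Sigma_{\epsilon_t}$ case and only bites in the isotropic specialization is also correct and worth noting.
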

Analogous to the results in the main paper, Lemma \ref{projectionLikeGen} implies the following  \emph{working likelihood} for the factors $\beta_{k,t}$ and associated parameters:
\begin{lemma}\label{ftsLikeGen2}
Under the identifiability constraint $\bm F' \bm F = \bm I_K$, the joint likelihood in \eqref{ftsVecGen} for $\{\beta_{k,t}\}$ is equivalent to the \emph{working likelihood} implied by
\begin{equation}\label{ftsVecGen2}
\bm{\tilde Y_{t}} =  \bm \beta_{t} + \bm{\tilde \epsilon_{t}},
\quad  \bm{\tilde \epsilon_{t}} \stackrel{indep}{\sim} N(\bm 0, \bm Q_{\beta_t}^{-1})
\end{equation}
up to a constant that does not depend on $\bm \beta_{t}$, where $\bm{\tilde Y_{t}} = \bm Q_{\beta_t}^{-1} \bm \ell_{\beta_t}$ for $\bm Q_{\beta_t} = \bm F' \bm \Sigma_{\epsilon_t}^{-1} \bm F$ and $\bm \ell_{\beta_t} = \bm F'\bm \Sigma_{\epsilon_t}^{-1} \bm Y_t$.
\end{lemma}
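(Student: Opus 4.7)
The plan is to start directly from the likelihood expression in equation \eqref{ftsLikeGen} of Lemma \ref{projectionLikeGen}, which I may take as given. The exponent of that likelihood decomposes into three pieces at each time $t$: a term $\bm Y_t' \bm \Sigma_{\epsilon_t}^{-1} \bm Y_t$ that does not involve $\bm \beta_t$, a quadratic term $\bm \beta_t' (\bm F' \bm \Sigma_{\epsilon_t}^{-1} \bm F) \bm \beta_t = \bm \beta_t' \bm Q_{\beta_t} \bm \beta_t$, and a cross term $-2 \bm \beta_t' (\bm F' \bm \Sigma_{\epsilon_t}^{-1} \bm Y_t) = -2 \bm \beta_t' \bm \ell_{\beta_t}$. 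The entire $\bm \beta_t$-dependence of the log-likelihood (summed over $t$) is therefore
\begin{equation*}
-\frac{1}{2}\sum_{t=1}^T \bigl[\bm \beta_t' \bm Q_{\beta_t} \bm \beta_t - 2 \bm \beta_t' \bm \ell_{\beta_t}\bigr].
\end{equation*}

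Next I would complete the square in $\bm \beta_t$, using that $\bm Q_{\beta_t}$ is symmetric positive definite (guaranteed because $\bm \Sigma_{\epsilon_t}$ is positive definite and, by the identifiability constraint $\bm F' \bm F = \bm I_K$, $\bm F$ has full column rank $K$). Writing $\bm{\tilde Y}_t \equiv \bm Q_{\beta_t}^{-1} \bm \ell_{\beta_t}$, the bracketed expression becomes
\begin{equation*}
(\bm \beta_t - \bm{\tilde Y}_t)' \bm Q_{\beta_t} (\bm \beta_t - \bm{\tilde Y}_t) - \bm{\tilde Y}_t' \bm Q_{\beta_t} \bm{\tilde Y}_t,
\end{equation*}
where the second term is independent of $\bm \beta_t$ and is therefore absorbed into the multiplicative constant.

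Finally, I would recognize the remaining $\bm \beta_t$-dependent factor as (up to a normalizing constant that again does not depend on $\bm \beta_t$) the Gaussian density corresponding to $\bm{\tilde Y}_t \mid \bm \beta_t \sim N(\bm \beta_t, \bm Q_{\beta_t}^{-1})$, which is exactly the working model \eqref{ftsVecGen2}. Since the working likelihood and the original likelihood agree in their dependence on $\bm \beta_t$ up to a $\bm \beta_t$-free constant, all inferences about $\bm \beta_t$ and parameters that enter the model only through $\bm \beta_t$ are identical under the two likelihoods.

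The main obstacle is more conceptual than technical: one must be careful about what is meant by ``equivalent.'' Equivalence holds only in the sense that, for each fixed $\bm F$ and $\bm \Sigma_{\epsilon_t}$, the two likelihoods differ by a factor that does not depend on $\bm \beta_t$; terms such as $\bm Y_t' \bm \Sigma_{\epsilon_t}^{-1} \bm Y_t$ and $|\bm \Sigma_{\epsilon_t}|$ do depend on $\bm \Sigma_{\epsilon_t}$ and must be retained when sampling $\bm \Sigma_{\epsilon_t}$ (or the loading curves). Verifying invertibility of $\bm Q_{\beta_t}$ from the orthonormality constraint, and tracking which constants may be dropped, are the only subtleties; the algebra itself is a single completion-of-the-square.
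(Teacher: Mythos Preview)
Your proposal is correct and follows exactly the route the paper intends: the paper states Lemma~\ref{ftsLikeGen2} as an immediate consequence of Lemma~\ref{projectionLikeGen} without spelling out the algebra, and your completion-of-the-square argument is precisely the step that bridges the two. Your remarks on the invertibility of $\bm Q_{\beta_t}$ via the constraint $\bm F'\bm F = \bm I_K$ and on which constants may be dropped are accurate and, if anything, more explicit than what the paper provides.
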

The most useful case of Lemma \ref{ftsLikeGen2} is when $\bm \Sigma_{\epsilon_t}$ is diagonal, so that the error covariance function is $C_{\epsilon_t}(\bm \tau, \bm u) = \mbox{Cov}(\epsilon_t(\bm \tau), \epsilon_t(\bm u)) = \mathbb{I}\{\bm \tau = \bm u\} V_{\epsilon_t}(\bm \tau)$ and $V_{\epsilon_t}(\cdot)$ is the variance function. In this case, computing the inverse $\bm \Sigma_{\epsilon_t}^{-1}$ is efficient, and the projection step to obtain $\bm{\tilde Y_{t}}$ only requires the inverse of a $K\times K$ matrix, $\bm Q_{\beta_t}$. Furthermore, if $V_{\epsilon_t}(\cdot) = V_{\epsilon}(\cdot)$ is non-dynamic, then computing $\bm Q_{\beta_t}^{-1} = \bm Q_{\beta}^{-1}$ is a one-time cost per MCMC iteration. 

\subsection*{MCMC Diagnostics}\label{extra:diag}
We include MCMC diagnostics for the fertility application (Section \ref{fertility}). Traceplots for $\hat Y_t(\bm\tau) = \sum_{k=1}^K f_k(\bm\tau) \beta_{k,t}$ and $\tilde\alpha_j(\bm\tau) = \sum_{k=1}^K f_k(\bm\tau) \alpha_{j,k}$ are in Figures \ref{fig:trace-yhat-fert} and \ref{fig:trace-fert}, respectively. 
These traceplots indicate good mixing and suggest convergence.

\begin{figure}[h]
\begin{center}
\includegraphics[width=1\textwidth]{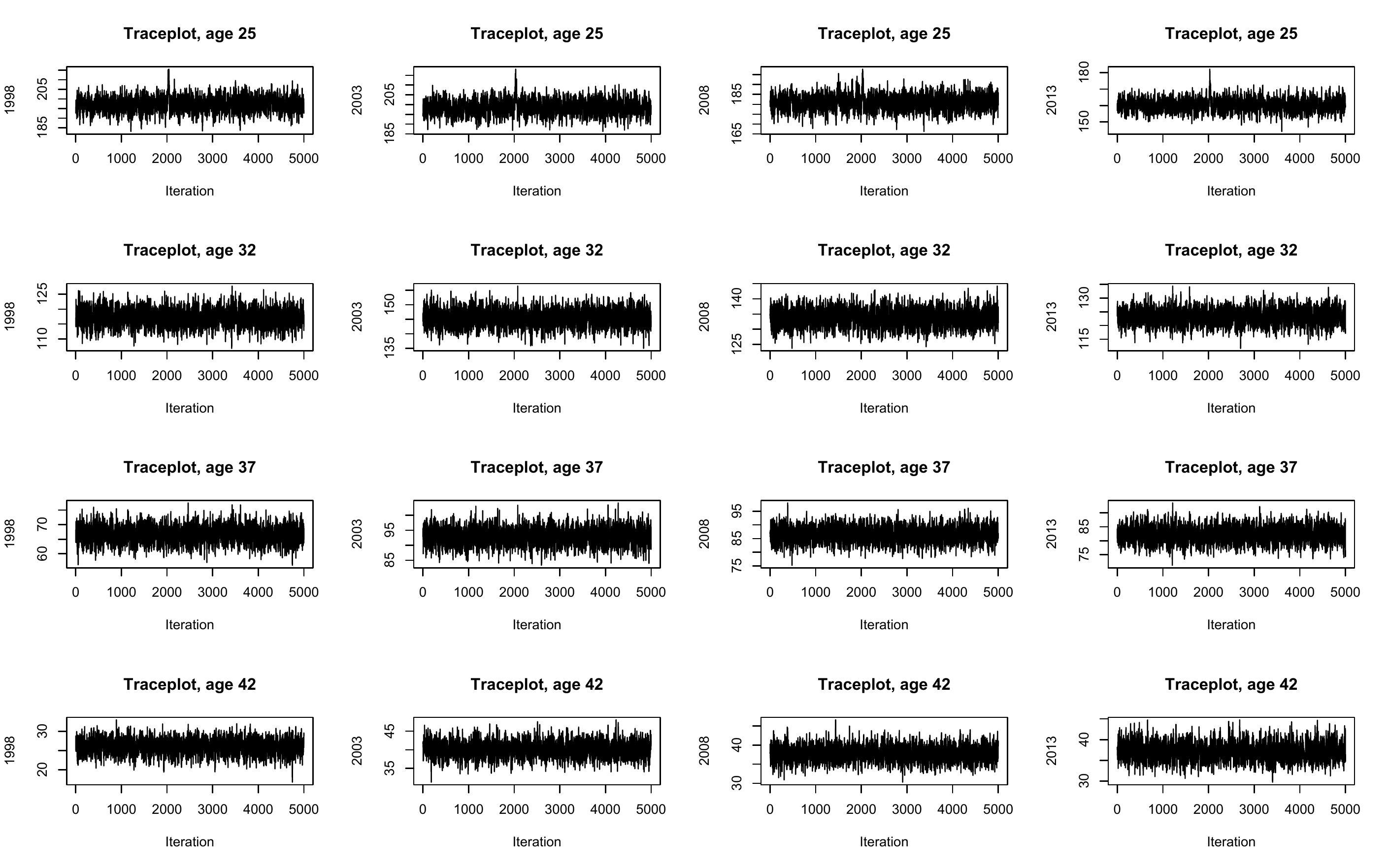}
\caption{Traceplots for the model-imputed ASFR curves $\hat Y_t(\bm\tau) = \sum_{k=1}^K f_k(\bm\tau) \beta_{k,t}$ at various ages $\bm\tau$ for various years $t$ in the fertility application. The traceplots indicate good mixing.
\label{fig:trace-yhat-fert}}
\end{center}
\end{figure}

\begin{figure}[h]
\begin{center}
\includegraphics[width=1\textwidth]{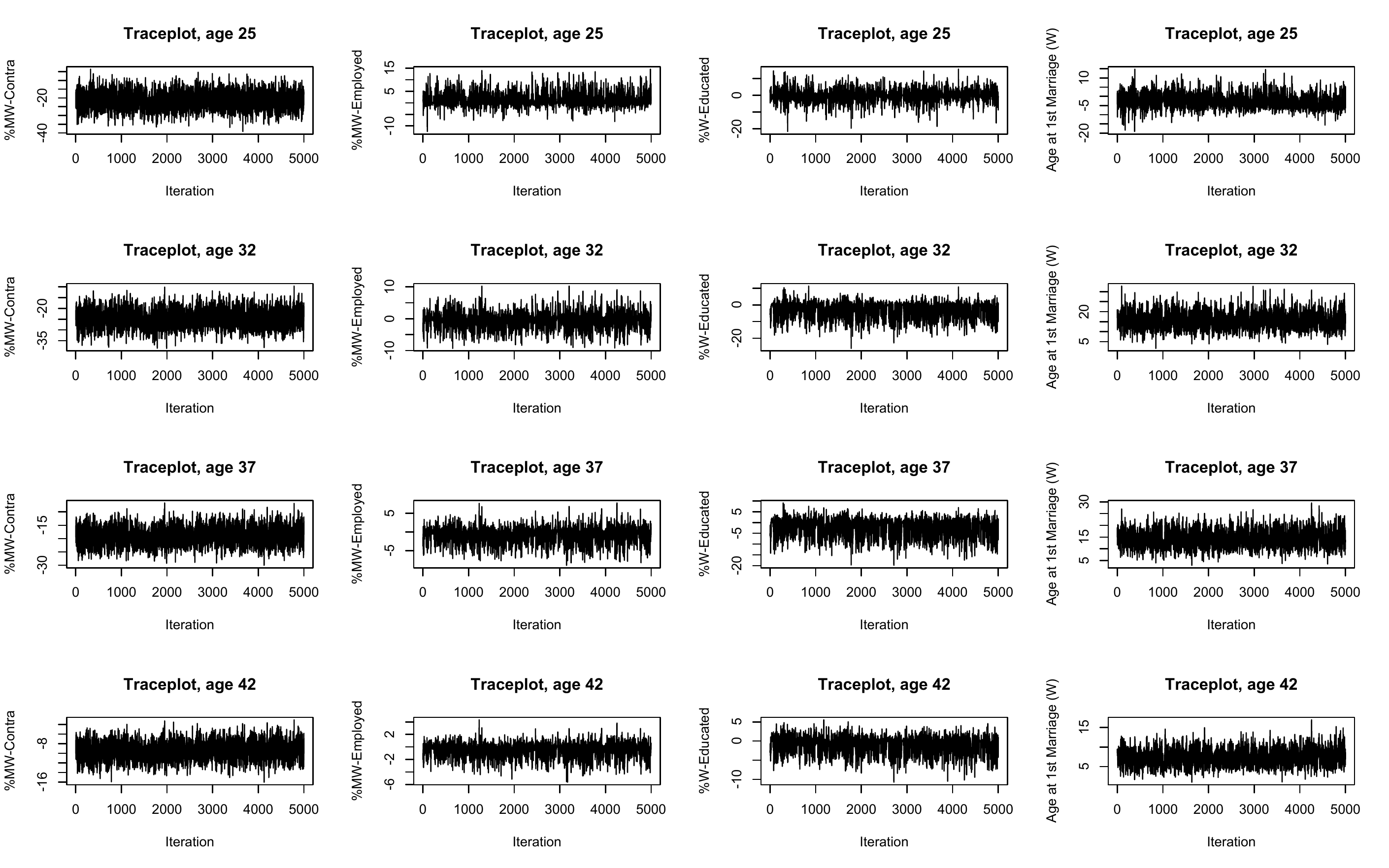}
\caption{Traceplots for the static regression functions $\tilde\alpha_j(\bm\tau) = \sum_{k=1}^K f_k(\bm\tau) \alpha_{j,k}$ at various ages $\bm\tau$ for the predictors in the fertility application. The traceplots indicate acceptable mixing.
\label{fig:trace-fert}}
\end{center}
\end{figure}

\subsection*{Additional Application Details}\label{extra:app}
Figure \ref{fig:sv} plots the observation error standard deviation, $\sigma_{\epsilon_t}$, for the yield curve application. To incorporate volatility clustering, we include a stochastic volatility model for $\sigma_{\epsilon_t}^2$, following \cite{kastner2014ancillarity}. There is strong evidence that the observation error standard deviation is time-varying. Importantly, the proposed DFOSR model framework can incorporate the stochastic volatility model with minimal modifications.

\begin{figure}[h]
\begin{center}
\includegraphics[width=1\textwidth]{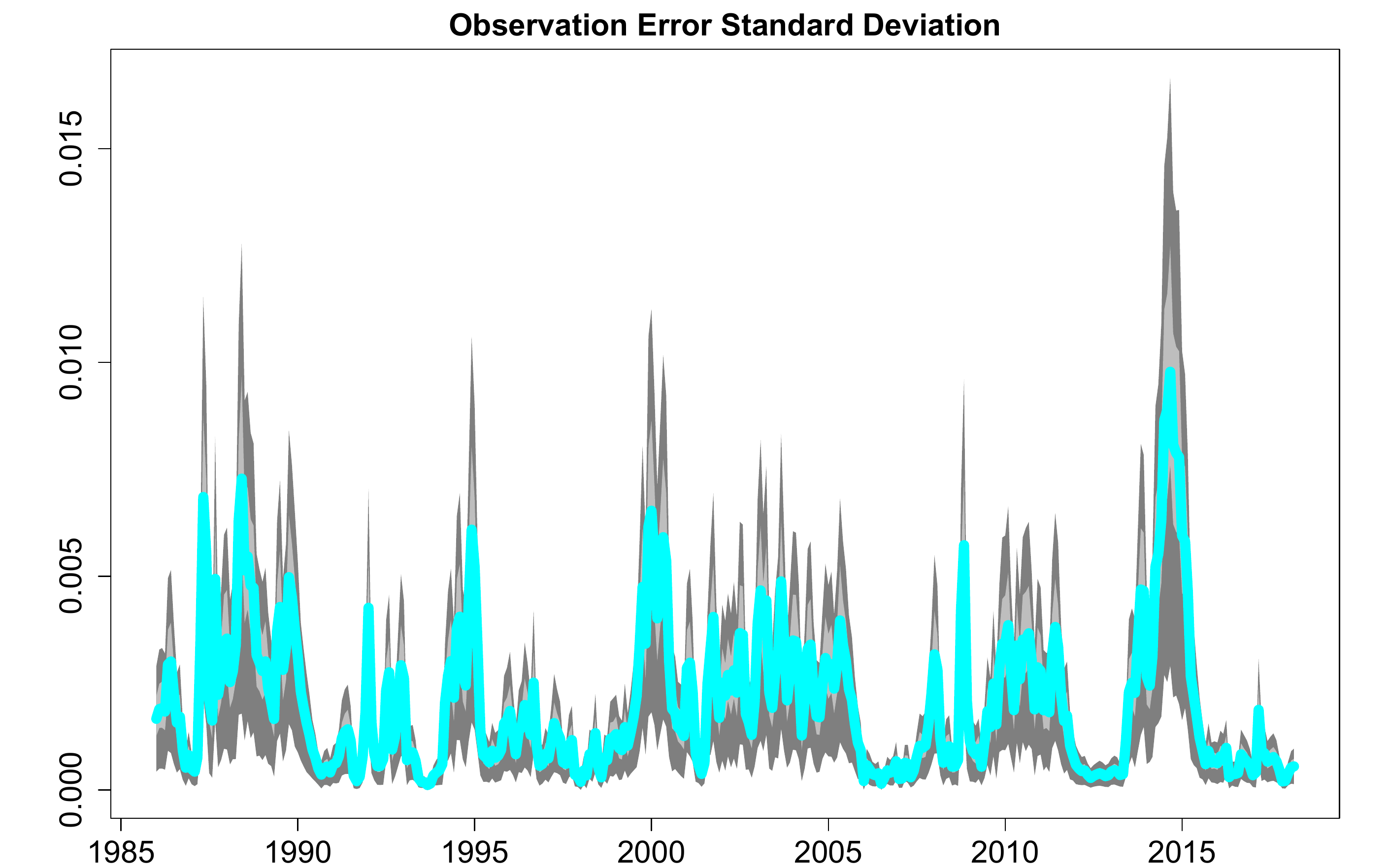}
\caption{Observation error standard deviation, $\sigma_{\epsilon_t}$, for the yield curve data. The solid line is the posterior mean, the light gray bands are 95\% pointwise credible intervals, and the dark gray bands are 95\% simultaneous credible bands. There is strong evidence that the observation error standard deviation is time-varying.
\label{fig:sv}}
\end{center}
\end{figure}

\end{document}